\documentclass[%
superscriptaddress,
nofootinbib,
nobibnotes,
amsmath,amssymb,
aps,
prx,
onecolumn,
floatfix,
longbibliography
]{revtex4-2}
\usepackage[english]{babel}
\usepackage[T1]{fontenc}
\usepackage{graphicx}
\usepackage{float}

\makeatletter
\let\newfloat\newfloat@ltx 
\makeatother

\usepackage{algorithm}%
\usepackage{algpseudocode}%

\usepackage{todonotes}
\usepackage{hyperref}
\usepackage[capitalise]{cleveref}

\usepackage{qcircuit}
\usepackage{float}
\usepackage{mathtools}
\usepackage{hyperref}
\usepackage{fancyhdr}
\usepackage{extramarks}
\usepackage{amsmath}
\usepackage{amsthm}
\usepackage{amsfonts}
\usepackage{amssymb}
\usepackage{tikz}
\usepackage{dsfont}
\usepackage{bbm}
\usepackage{diagbox}
\usepackage{wasysym}

\usepackage{amssymb}
\usepackage{verbatim}
\usepackage{tabularx,lipsum,environ}
\usepackage{csquotes}

\newtheorem{theorem}{Theorem}

\newtheorem{lemma}[theorem]{Lemma}
\newtheorem{definition}{Definition}
\newtheorem{question}{Question}

\newcommand{\naturel}{\mathbb{N}}

\usepackage{bm}
\usepackage{bbm}
\usepackage{thmtools}

\newcommand{\sm}{\setminus}
\newcommand{\set}[1]{\left\{#1\right\}}

\newcommand{\rw}{\mathsf{rw}}
\newcommand{\tw}{\mathsf{tw}}
\newcommand{\frw}[1]{\mathsf{frw}_#1}
\newcommand{\ftw}[1]{\mathsf{ftw}_#1}

\usepackage{appendix}

\usepackage{tikzit}

\tikzstyle{gate}=[shape=rectangle, text height=1.5ex, text depth=0.25ex, yshift=0.5mm, fill=white, draw=black, minimum height=3mm, yshift=-0.5mm, minimum width=3mm, font={\small}, tikzit category=circuit, inner sep=2pt]
\tikzstyle{big gate}=[shape=rectangle, text height=1.5ex, text depth=0.25ex, yshift=0.5mm, fill=white, draw=black, minimum height=10mm, yshift=-0.5mm, minimum width=5mm, font={\small}, tikzit category=circuit]
\tikzstyle{Z dot}=[inner sep=0mm, minimum size=2mm, shape=circle, draw=black, fill={rgb,255: red,221; green,255; blue,221}, tikzit category=zx]
\tikzstyle{Z phase dot}=[minimum size=5mm, font={\footnotesize\boldmath}, shape=rectangle, rounded corners=2mm, inner sep=0.2mm, outer sep=-2mm, scale=0.8, tikzit shape=circle, draw=black, fill={rgb,255: red,221; green,255; blue,221}, tikzit draw=blue, tikzit category=zx]
\tikzstyle{X dot}=[Z dot, shape=circle, draw=black, fill={rgb,255: red,255; green,136; blue,136}, tikzit category=zx]
\tikzstyle{X phase dot}=[Z phase dot, tikzit shape=circle, tikzit draw=blue, fill={rgb,255: red,255; green,136; blue,136}, font={\footnotesize\boldmath}, tikzit category=zx]
\tikzstyle{hadamard}=[fill=yellow, draw=black, shape=rectangle, inner sep=0.6mm, minimum height=1.5mm, minimum width=1.5mm, tikzit category=zx]
\tikzstyle{paulibox}=[fill={rgb,255: red,221; green,221; blue,255}, draw=black, shape=rectangle, inner sep=0.6mm, minimum height=5mm, minimum width=5mm, font={\footnotesize}, text height=1.5ex, text depth=0.25ex, tikzit category=zx]
\tikzstyle{vertex}=[inner sep=0mm, minimum size=1mm, shape=circle, draw=black, fill=black, tikzit category=misc]
\tikzstyle{vertex set}=[inner sep=0mm, minimum size=1mm, shape=circle, draw=black, fill=white, font={\footnotesize\boldmath}, tikzit category=misc]
\tikzstyle{small black dot}=[fill=black, draw=black, shape=circle, inner sep=0pt, minimum width=1.2mm, tikzit category=circuit]
\tikzstyle{cnot ctrl}=[fill=black, draw=black, shape=circle, inner sep=0pt, minimum width=1.2mm, tikzit category=circuit]
\tikzstyle{cnot targ}=[fill=white, draw=white, shape=circle, tikzit category=circuit, label={center:$\oplus$}, inner sep=0pt, minimum width=2.1mm, tikzit fill={rgb,255: red,102; green,204; blue,255}, tikzit draw=black]
\tikzstyle{ket}=[fill=white, draw=black, shape=regular polygon, regular polygon sides=3, regular polygon rotate=-30, scale=0.7, inner sep=1pt, tikzit category=circuit, tikzit shape=rectangle, tikzit fill=green]
\tikzstyle{bra}=[fill=white, draw=black, shape=regular polygon, regular polygon sides=3, regular polygon rotate=30, scale=0.7, inner sep=1pt, tikzit category=circuit, tikzit shape=rectangle, tikzit fill=red]
\tikzstyle{scalar}=[shape=rectangle, text height=1.5ex, text depth=0.25ex, yshift=0.5mm, fill=white, draw=black, minimum height=5mm, yshift=-0.5mm, minimum width=5mm, font={\small}]
\tikzstyle{clabel}=[fill=white, draw=none, shape=rectangle, tikzit fill={rgb,255: red,56; green,255; blue,242}, font={\footnotesize}, inner sep=1pt, tikzit category=labels]
\tikzstyle{empty diagram}=[draw={gray!40!white}, dashed, shape=rectangle, minimum width=1cm, minimum height=1cm, tikzit category=misc]
\tikzstyle{amap}=[fill=white, draw=black, shape=NEbox, tikzit category=asymmetric, tikzit fill=yellow, tikzit shape=rectangle]
\tikzstyle{amap conj}=[fill=white, draw=black, shape=NWbox, tikzit category=asymmetric, tikzit fill=green, tikzit shape=rectangle]
\tikzstyle{amap adj}=[fill=white, draw=black, shape=SEbox, tikzit category=asymmetric, tikzit fill=red, tikzit shape=rectangle]
\tikzstyle{amap trans}=[fill=white, draw=black, shape=SWbox, tikzit category=asymmetric, tikzit fill=orange, tikzit shape=rectangle]
\tikzstyle{astate}=[fill=white, draw=black, shape=NEtriangle, tikzit category=asymmetric, tikzit shape=circle, tikzit fill=yellow]
\tikzstyle{astate conj}=[fill=white, draw=black, shape=NWtriangle, tikzit category=asymmetric, tikzit shape=circle, tikzit fill=green]
\tikzstyle{astate adj}=[fill=white, draw=black, shape=SEtriangle, tikzit category=asymmetric, tikzit shape=circle, tikzit fill=red]
\tikzstyle{astate trans}=[fill=white, draw=black, shape=SWtriangle, tikzit category=asymmetric, tikzit shape=circle, tikzit fill=orange]
\tikzstyle{white dot}=[inner sep=0mm, minimum size=2mm, shape=circle, draw=black, fill={rgb,255: red,250; green,250; blue,250}]
\tikzstyle{white phase dot}=[minimum size=5mm, font={\footnotesize\boldmath}, shape=rectangle, rounded corners=2mm, inner sep=0.2mm, outer sep=-2mm, scale=0.8, tikzit shape=circle, draw=black, fill={rgb,255: red,250; green,250; blue,250}, tikzit draw=blue]
\tikzstyle{hbox}=[shape=rectangle, text height=2mm, fill={rgb,255: red,255; green,235; blue,61}, draw=black, minimum height=2mm, minimum width=2mm, font={\small}, tikzit category=zh, inner sep=0pt, rounded corners=0.5mm]
\tikzstyle{Z dot (zh)}=[inner sep=0mm, minimum size=2mm, shape=circle, draw=black, fill={rgb,255: red,250; green,250; blue,250}, tikzit category=zh]
\tikzstyle{X dot (zh)}=[Z dot, shape=circle, draw=black, fill={rgb,255: red,193; green,193; blue,193}, tikzit category=zh]
\tikzstyle{triangle}=[fill={rgb,255: red,255; green,136; blue,136}, draw=black, shape=isosceles triangle, isosceles triangle apex angle=60, minimum size=2.5mm, inner sep=0mm]
\tikzstyle{labelled hbox}=[shape=rectangle, text height=1.75ex, text depth=0.5ex, fill={rgb,255: red,255; green,235; blue,61}, draw=black, minimum height=3mm, minimum width=4mm, font={\small}, tikzit category=zh, inner sep=1.3pt, rounded corners=0.5mm]
\tikzstyle{Z phase dot (zh)}=[Z phase dot, tikzit shape=circle, tikzit draw=blue, fill={rgb,255: red,250; green,250; blue,250}, font={\footnotesize\boldmath}, tikzit category=zh]
\tikzstyle{X phase dot (zh)}=[Z phase dot, tikzit shape=circle, tikzit draw=blue, fill={rgb,255: red,193; green,193; blue,193}, font={\footnotesize\boldmath}, tikzit category=zh]
\tikzstyle{W node}=[fill=black, draw=black, shape=regular polygon, regular polygon sides=3, minimum size=2mm]
\tikzstyle{Z dot (zw)}=[fill=white, draw=black, shape=circle, minimum width=1.2mm, inner sep=0pt]
\tikzstyle{Z phase dot XL}=[Z phase dot, fill={rgb,255: red,250; green,250; blue,250}, draw=black, shape=circle, tikzit draw={rgb,255: red,191; green,0; blue,64}, tikzit shape=circle, font={\large\boldmath}, inner sep=0.0mm]
\tikzstyle{gn}=[fill=green, draw=black, shape=circle, tikzit category=ZX, tikzit fill=green, tikzit draw=black, tikzit shape=circle, inner sep=2pt]
\tikzstyle{rn}=[fill=red, draw=black, shape=circle, tikzit fill=red, tikzit draw=black, tikzit category=ZX, tikzit shape=circle, inner sep=2pt]
\tikzstyle{divide}=[regular polygon, regular polygon sides=3, shape border rotate=90, draw=black, fill=gray, inner sep=1.6pt, tikzit category=scal, rounded corners=0.8mm]
\tikzstyle{black}=[fill=black, draw=black, shape=circle, tikzit fill=black, tikzit draw=black, tikzit shape=circle, tikzit category=IH, inner sep=2pt]
\tikzstyle{gather}=[fill=gray, draw=black, tikzit category=scal, rounded corners=0.8mm, regular polygon, regular polygon sides=3, shape border rotate=-90, inner sep=1.6pt]
\tikzstyle{ggen}=[fill=white, draw=black, shape=rectangle, rounded corners=2mm, line width=1pt, tikzit draw=red, tikzit category=scal]
\tikzstyle{white}=[fill=white, draw=black, shape=circle, inner sep=2pt, tikzit category=IH]
\tikzstyle{mbox}=[fill=white, draw=black, rounded rectangle, rounded rectangle west arc=none, tikzit category=scal, tikzit shape=rectangle]
\tikzstyle{A}=[fill=white, shape=circle, tikzit category=scal, inner sep=1pt]
\tikzstyle{ggreen}=[fill=green, draw=black, shape=circle, tikzit category=SZX, tikzit fill=green, tikzit draw=black, line width=1pt, inner sep=2pt]
\tikzstyle{gred}=[fill=red, draw=black, shape=circle, rounded corners=2mm, tikzit category=SZX, inner sep=2pt, tikzit fill=red, line width=1pt]
\tikzstyle{ghad}=[fill=yellow, draw=black, shape=rectangle, tikzit category=SZX, tikzit shape=rectangle, tikzit fill=yellow, inner sep=2pt, line width=1pt]
\tikzstyle{boxm}=[fill=white, draw=black, rounded rectangle, tikzit category=scal, tikzit shape=rectangle, rounded rectangle east arc=none]
\tikzstyle{box}=[fill=white, draw=black, shape=rectangle]
\tikzstyle{had}=[fill=yellow, draw=black, shape=rectangle, tikzit category=ZX, tikzit fill=yellow, tikzit draw=black, inner sep=2pt]
\tikzstyle{gwhite}=[fill=white, draw=black, shape=circle, tikzit fill=white, tikzit shape=circle, line width=1 pt, inner sep=2 pt, tikzit draw=red]
\tikzstyle{gblack}=[fill=black, draw=black, shape=circle, tikzit fill=black, tikzit shape=circle, line width=1 pt, inner sep=2 pt, tikzit draw=red]
\tikzstyle{antipode}=[fill=red, draw=black, shape=rectangle, tikzit fill=red, tikzit draw=black, tikzit shape=rectangle, inner sep=2pt]
\tikzstyle{diamond}=[fill=white, draw=black, shape=diamond, inner sep=2pt]
\tikzstyle{mongr}=[fill=green, draw=green, shape=circle, inner sep=2pt]
\tikzstyle{monbl}=[fill=blue, draw=black, shape=circle, inner sep=2pt]
\tikzstyle{bg}=[inner sep=0.7mm, minimum width=0pt, minimum height=0pt, fill=green, draw=white, very thick, shape=circle]
\tikzstyle{br}=[inner sep=0.7mm, minimum width=0pt, minimum height=0pt, fill=red, draw=white, very thick, shape=circle]
\tikzstyle{rmat}=[draw, signal, fill=gray, signal to=east, signal from=west, inner sep=1pt, minimum height=6pt]
\tikzstyle{lmat}=[draw, signal, fill=gray, signal to=west, signal from=east, inner sep=1pt, minimum height=6pt]
\tikzstyle{umat}=[draw, signal, fill=gray, signal to=north, signal from=south, inner sep=1pt, minimum width=6pt]
\tikzstyle{dmat}=[draw, signal, fill=gray, signal to=south, signal from=north, inner sep=1pt, minimum width=6pt]

\tikzstyle{simple}=[-]
\tikzstyle{hadamard edge}=[-, dashed, dash pattern=on 2pt off 0.5pt, thick, draw={rgb,255: red,68; green,136; blue,255}]
\tikzstyle{box edge}=[-, dashed, dash pattern=on 2pt off 0.5pt, thick, draw={rgb,255: red,203; green,192; blue,225}]
\tikzstyle{brace edge}=[-, tikzit draw=blue, decorate, decoration={brace,amplitude=1mm,raise=-1mm}]
\tikzstyle{diredge}=[->, thick]
\tikzstyle{double edge}=[-, double, shorten <=-1mm, shorten >=-1mm, double distance=2pt]
\tikzstyle{gray edge}=[-, {gray!60!white}]
\tikzstyle{pointer edge}=[->, very thick, gray]
\tikzstyle{boldedge}=[-, line width=1.0pt, shorten <=-0.17mm, shorten >=-0.17mm]
\tikzstyle{bidir edge}=[<->, very thick, draw={rgb,255: red,191; green,191; blue,191}]
\tikzstyle{purple edge}=[->, thick, draw={rgb,255: red,225; green,117; blue,216}]
\tikzstyle{green edge}=[->, thick, draw={rgb,255: red,167; green,231; blue,137}]
\tikzstyle{orange edge}=[->, thick, draw={rgb,255: red,245; green,170; blue,63}]
\tikzstyle{blue edge}=[->, thick, draw={rgb,255: red,68; green,136; blue,255}]
\tikzstyle{any edge}=[->, thick, draw=cyan]
\tikzstyle{red edge}=[->, thick, draw={rgb,255: red,255; green,136; blue,136}]
\tikzstyle{bidiredge}=[<->, thick]
\tikzstyle{dashed diredge}=[->, dashed, dash pattern=on 1pt off 0.5pt]
\tikzstyle{bidashed diredge}=[<->, dashed, dash pattern=on 1pt off 0.5pt]
\tikzstyle{gray fill}=[-, fill={rgb,255: red,234; green,234; blue,234}, draw=black]
\tikzstyle{white fill}=[-, fill=white]
\tikzstyle{arrow}=[->]
\tikzstyle{very thick}=[-, line width=1pt, tikzit draw=red]
\tikzstyle{pointille}=[dashed, -]
\tikzstyle{red}=[-, draw=red]
\tikzstyle{blue}=[-, draw=blue]
\tikzstyle{green}=[-, draw=green]
\tikzstyle{arrow}=[->]
\tikzstyle{strike}=[-, tikzit draw={rgb,255: red,191; green,0; blue,64}, strike through]
\tikzstyle{strike'}=[-, tikzit draw=cyan, strike bend]
\tikzstyle{dashed no arrow}=[-, dashed, dash pattern=on 1pt off 0.5pt]

\input{quantum.tikzdefs}

\AtEndDocument{\refstepcounter{theorem}\label{finalthm}}

\begin{document}

\title{Unifying Graph Measures and Stabilizer Decompositions \\ for the Classical Simulation of Quantum Circuits}

\author{Julien Codsi}%
\thanks{Supported by 
NSF Grant DMS-2348219, 
AFOSR grant  FA9550-22-1-0083  and  the Fonds de recherche du Qu\'{e}bec Grant 321124.}
\email{jc3530@princeton.edu}
\affiliation{Princeton University}

\author{Tuomas Laakkonen}
\thanks{TL acknowledges support from the US Department of Energy under Award No. DE-SC0020264}
\email{tsrl@mit.edu}
\affiliation{Massachusetts Institute of Technology}

 
\begin{abstract}
Various algorithms have been developed to simulate quantum circuits on classical hardware. Among the most prominent are approaches based on \emph{stabilizer decompositions} and \emph{tensor network contraction}. In this work, we present a unified framework that bridges these two approaches, placing them under a common formalism. Using this, we present two new algorithms to simulate an $n$-qubit circuit $C$: one that runs in $\tilde{O}(T^{\mathsf{tw}(C)})$ time and the other in $\tilde{O}(T^{\gamma\cdot \mathsf{tw}(C)})$ time, where $\mathsf{tw}(C)$ and $\mathsf{rw}(C)$ refer to the the tree-width and rank-width, respectively, of a tensor network associated to $C$, $T$ is the number of non-Clifford gates in $C$, and $\gamma \approx 3.42$. The proposed algorithms are simple, only require a linear amount of memory, are trivially parallelizable, and interact nicely with ZX-diagram simplification routines. Furthermore, we introduce the refined complexity measures \emph{focused tree-width} and \emph{focused rank-width}, which are always at least as efficient as their standard equivalent; these can be directly applied within our simulation algorithms, allowing for a more precise upper bound on the run time.
\end{abstract}


\maketitle

\setlength{\parskip}{1pt plus1pt minus1pt}
\setlength{\parindent}{15pt}

The classical simulation of quantum computing plays a critical role in the development, validation, and theoretical understanding of quantum algorithms. As we navigate the \emph{noisy intermediate-scale quantum} era, classical simulators serve an essential role in validating quantum hardware. Without the ability to model quantum behavior on classical hardware, we cannot easily verify the outputs of quantum processors or refine the algorithms intended to run on them. Despite the widely-held belief that simulating generic quantum systems classically requires time exponential in the system size, many quantum computations can be simulated with remarkable efficiency. A striking example of this is the celebrated \emph{Gottesman--Knill theorem} \cite{gottesman_heisenberg_1998}, which demonstrates that circuits composed entirely of Clifford gates are efficiently classically simulable. In contrast, several classes of algorithms are known that establish asymptotic bounds on the time complexity of general simulation in terms of different parameters. Two of the most prominent frameworks include \emph{stabilizer decomposition}, which scales with the number of non-Clifford gates (e.g., $T$-gates) required for a given computation, and \emph{tensor network methods}, which leverage graph-theoretic measures of the underlying circuit topology to optimize the computation.

Our results aim to unite these families of bounds under a single framework. In particular, in Section \ref{sec:rank-width} we give a strong simulation algorithm with a time complexity $\tilde{O}(T^{\gamma \cdot \rw(C)})$, where $T$ is the number of non-Clifford gates, $rw(C)$ is the \emph{rank-width} of an associated tensor network $C$, and $\gamma \approx 3.42$. Then in Section \ref{sec:tree-width}, we give an algorithm with time complexity complexity $\tilde{O}(T^{\tw(C)})$ where $\tw$ is the \emph{tree-width} of $C$. Our algorithms are trivially parallelizable, only use a linear amount of memory with respect to the number of gates, and can also be combined with simplification routines based on the ZX-calculus \cite{CD1,CD2}; this provides a speedup in practice while maintaining the runtime bounds based on graph-theoretic measures. Indeed, rank-width cannot increase while performing local complementation and pivoting, which are some of the main tools used in many simplification routines (see, for example, \cite{cliffsimp}). In Section \ref{sec:focused}, we introduce refined complexity measures called \emph{focused tree-width} and \emph{focused rank-width}. These are always at least as efficient as their standard equivalent and can be directly applied within our simulation algorithm, which allows us to more precisely upper bound the run time of the algorithms in Section \ref{sec:rank-width}. Finally, in Section~\ref{sec:benchmarks}, we empirically evaluate the performance of one of our algorithms across several families of random circuits. We observe that it performs particularly well when the edge density is very high, in contrast to typical stabilizer methods and approaches based on tree-width.

We remark that Kuyanov and Kissinger recently and independently exhibited an algorithm with qualitatively similar running times (only scaling exponentially with respect to rank-width) by a different method \cite{personal}.

The methodology developed in this work is highly versatile. Indeed, with straightforward modifications to our approach, one can derive analogous results for \emph{carving-width} and \emph{branch-width}. These encompass the primary findings of \cite{jakes2019carving} and \cite{ogorman_parameterization_2019} when specialized to the context of quantum circuits. We omit these derivations for brevity.

Our algorithm was found through the lens of ZX-calculus. As it will be used throughout this paper, we give a brief overview of the ZX-calculus~\cite{CD1,CD2} in the appendix. For an in-depth reference see~\cite{vandewetering2020zxcalculus}.
We address the case of \emph{strong} simulation, i.e.~calculating amplitudes of the circuit. This can be used to derive a \emph{weak} simulation algorithm, i.e.~sampling with a multiplicative linear overhead in the number of gates, see \cite{bravyi2021simulate}. Without loss of generality, we can assume that we want to know the amplitude $\bra{0^n}C\ket{0^n}$ for a circuit $C$.

\subsection*{Related Work}

The Gottesman-Knill theorem \cite{gottesman_heisenberg_1998} established the baseline for efficient simulation by restricting the gate set to the Clifford group. Subsequent work (for example \cite{bravyi_simulation_2019,bravyi_improved_2016}) extended this by developing algorithms that scale exponentially only with the $T$-count, utilizing the concept of stabilizer rank. A particularly useful tool with this approach is the ZX-calculus, which has been used to create several simulation algorithms based on stabilizer decompositions \cite{kissinger_simulating_2022,kissinger2022classical,sutcliffe_procedurally_2024,wille_basis_2022,codsi_cutting-edge_nodate,codsi_classically_2025,koch_speedy_2024}.

Parallel to these developments, \emph{tensor network methods} have proven highly effective for circuits with limited entanglement. Methods utilizing measures of graph width, such as \emph{tree-width} \cite{markov2008simulating}, and \emph{branch-width} \cite{ogorman_parameterization_2019} have been proposed to bound the cost of contracting these networks. In the case of measurement-based quantum computing, it was shown that it is possible to simulate a computation with time scaling only exponentially with the \emph{rank-width} (or Schmidt rank) of the cluster state \cite{van_den_nest_classical_2007}. 
However, the graph parameter-based and stabiliser-based approaches have largely remained distinct (with some exceptions \cite{codsi_classically_2025, kuyanov_rank-widthbased_nodate}). Our approach builds upon the graph-cutting framework for stabilizer decomposition, a technique initially introduced by \cite{codsi_cutting-edge_nodate} and subsequently refined in \cite{sutcliffe2024smarter, sutcliffe_procedurally_2024}.

Our work bridges this gap by demonstrating that the complexity of simulating a quantum circuit can be made to depend only on the rank-width of the tensor network associated to the circuit, and the number of non-Clifford operations. This follows the direction of recent research into the ZX-calculus and other graphical simplification rules \cite{kissinger_reducing_2020, fischbach2025exhaustivesearchquantumcircuit}, which seek to reduce the number of non-Clifford operations. By introducing the \emph{focused rank-width}, we provide a more granular complexity measure that accounts for the specific distribution of non-Clifford tensors within the graph, offering a more precise tool for determining the limits of classical simulation than standard graph width measures alone.


\section{Prerequisites}

\subsection{Graph-Theoretic Prerequisites}
\label{sec:graph-prereqs}

Let us formally define both \emph{tree-width} and \emph{rank-with} of a graph.

\begin{definition}
For a graph $G$, a \emph{tree decomposition} $(T, \chi)$ of $G$ consists of a tree $T$ and a map $\chi\colon V(T) \to 2^{V(G)}$ with the following properties:
\begin{enumerate}
\itemsep -.2em
    \item For every $v \in V(G)$, there exists $t \in V(T)$ such that $v \in \chi(t)$.

    \item For every $v_1v_2 \in E(G)$, there exists $t \in V(T)$ such that $v_1, v_2 \in \chi(t)$.

    \item For every $v \in V(G)$, the subgraph of $T$ induced by $\{t \in V(T) \mid v \in \chi(t)\}$ is connected.
\end{enumerate}
For each $t\in V(T)$, we refer to $\chi(t)$ as a \emph{bag of} $(T, \chi)$.  The \emph{width} of a tree decomposition $(T, \chi)$, denoted by $\mathrm{width}(T, \chi)$, is $\max_{t \in V(T)} |\chi(t)|-1$. The \emph{tree-width} of $G$, denoted by $\tw(G)$, is the minimum width of a tree decomposition of $G$.
\end{definition}

\begin{definition}

Let \( G = (V, E) \) be a graph.
For any subset \( X \subseteq V \), define the \emph{cut-rank} function
\[
\rho_G(X) = \mathrm{rank}_{\mathbb{F}_2}(M_{X, V \setminus X})
\]
where \( M_{X, V \setminus X} \) is the \( |X| \times |V \setminus X| \) binary biadjacency matrix over the field \( \mathbb{F}_2 \) (i.e \( M_{x,y} = 1 \) iff \( (x, y) \in E \) with \( x \in X\), \(y \notin X\)) and the rank is taken over the finite field \( \mathbb{F}_2 \). Note that $\rho_G(X) \leq \min\{|X|, |V\setminus X|\}$.
    
\end{definition}

\begin{definition}
A \emph{rank-decomposition} of \( G \) is a pair \( (T, \mu) \) where \( T \) is a subcubic tree (each internal node has degree three), and \( \mu: V \to \mathrm{leaves}(T) \) is a bijection mapping each vertex of \( G \) to a leaf of \( T \). For each edge \( e \in E(T) \) we define a partition \( (A_e, B_e) \) of \( V \) by deleting \( e \), and letting \( A_e \) and \( B_e \) be the sets of vertices mapped to the leaves of the resulting subtrees. The \emph{width} of the decomposition is $\max_{e \in E(T)} \rho_G(A_e)$.

\end{definition}
\begin{definition}
The \emph{rank-width} of \( G \), denoted \( \rw(G) \), is the minimum width over all possible rank-decompositions.
\end{definition}

Even though these parameters are defined on graphs, we extend them straightforwardly to ZX-diagrams by looking at the underlying graph, removing phases and colors. We can see there is a simple upper bound for the rank-width:

\begin{lemma}
    \label{thm:rwn3ub}
    For a graph $G$ on $n$ vertices, we have $\rw(G) \leq \lceil \frac{n}{3}\rceil$.
\end{lemma}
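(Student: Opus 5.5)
We prove the bound with one explicit rank-decomposition. It uses only the inequality $\rho_G(X) \leq \min\{|X|, |V\setminus X|\}$ noted after the definition of cut-rank. The idea is to split $V$ into three roughly equal parts $V_1, V_2, V_3$. We hang each part off a single central node and arrange the vertices inside each part so that every tree edge cuts off a small side.

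\textbf{Case $n \leq 3$.} Take $T$ to be a star, or a single edge, or a single vertex. Every edge of $T$ separates one vertex from the rest, so its cut-rank is at most $1 \leq \lceil n/3\rceil$.

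\textbf{Case $n \geq 4$.} Partition $V$ into $V_1, V_2, V_3$ with $|V_i| \leq \lceil n/3\rceil$ and each $V_i$ nonempty. For each $i$, build a subcubic tree $T_i$ whose leaves are the vertices of $V_i$, for example a caterpillar, or a single leaf when $|V_i|=1$. Fix a root edge or root leaf in $T_i$. Then take a central node $c$ and join it to the root of each $T_i$, subdividing an edge of $T_i$ where needed so that every internal node has degree exactly three. This gives a subcubic tree $T$ whose leaves correspond bijectively to $V$.

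\textbf{Bounding each cut.} For any edge $e$ of $T$, removing $e$ leaves two components, and one of them avoids $c$. The component avoiding $c$ lies inside some $T_i$, together possibly with the edge joining it to $c$. Hence the set $A_e$ of leaves on that side satisfies $A_e \subseteq V_i$, so $|A_e| \leq \lceil n/3 \rceil$. It follows that
\[
\rho_G(A_e) \leq |A_e| \leq \lceil n/3\rceil .
\]
Taking the maximum over all edges $e$, the decomposition has width at most $\lceil n/3\rceil$, and so $\rw(G) \leq \lceil n/3\rceil$.

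\textbf{Main obstacle.} The only delicate point is making sure the construction really produces a subcubic tree in which every internal node has degree exactly three. This matters when some $|V_i| = 1$, because then that part is attached directly to $c$ as a leaf. The degree-three requirement holds at $c$ itself, since $c$ has exactly three neighbours. Inside each $T_i$, a caterpillar on $k \geq 2$ leaves has internal nodes of degree three, except for the root attachment point. That point gets its third edge from the connection to $c$, and we arrange this by subdividing the edge at which the subtree is attached.
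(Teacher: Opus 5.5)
Your proof is correct and is essentially the paper's argument: split $V$ into three parts of size at most $\lceil n/3\rceil$, join rank-decompositions of the parts at a single degree-three node (subdividing an edge where needed), and bound each cut using $\rho_G(X)\le |X|$ on the side that avoids the center. You handle the degenerate cases more carefully than the paper's sketch does, namely $n\le 3$ and parts of size one, where there is no edge to subdivide.
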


\begin{proof}
    Divide the set of vertices into thirds and construct an arbitrary rank-decomposition of each. Connect these decompositions with a single vertex of degree three after subdividing an edge from each of the resulting sub-cubic trees (to ensure that the resulting tree is sub-cubic). For any edge $e$, either $A_e$ or $B_e$ must have at most $\lceil\frac{n}{3}\rceil$ vertices. Therefore, $\rho_G(A_e) \leq \lceil\frac{n}{3}\rceil$, and hence $\rw(G) \leq \lceil\frac{n}{3}\rceil$.
\end{proof}

The next two lemmas show that all graphs with low rank-width or tree-width can be partitioned in a balanced way so that the associated cut is `small' in a certain sense; this will be essential for our algorithms. These are defined in terms of weight functions on the vertices:

\begin{definition}
    A \emph{weight function} on the vertices of a graph $G$ is a map $w : V(G) \to \mathbb{R}_{\geq 0}$. We extend this to any $X \subseteq V(G)$ as $w(X) = \sum_{v \in X} w(v)$. We will call $w(G) = w(V(G))$ the \emph{weight of} $G$.
\end{definition}

\begin{lemma}[{\cite{cygan2015parameterized}, Lemma 7.19}]\label{lemma: small separator}
    If a graph $G$ has $\tw(G)=k$, then for any weight function, there exists a separator set $X\subseteq V(G)$, with $|X|\leq k$, such that no component of $G\setminus X$ has more than half the weight of $G$.
\end{lemma}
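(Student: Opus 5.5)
The plan is to run the standard ``sink bag'' argument on a tree decomposition, but to separate along \emph{adhesions} $\chi(t)\cap\chi(t')$ of tree edges rather than along whole bags. Adhesions can be made to have size at most $k$, whereas bags can have size $k+1$. One caveat comes first. For $k=0$ with all weight on a single vertex, no separator of size $0$ works, so the statement needs $k\ge 1$ (or that no single vertex carries more than half the weight). I would state this assumption explicitly at the start and treat the degenerate case separately.

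\textbf{Step 1 (normalise the decomposition).} Fix a tree decomposition $(T,\chi)$ of width $k$. Whenever $\chi(t)\subseteq\chi(t')$ for adjacent $t,t'$, contract the edge $tt'$ and keep the larger bag. This preserves all three axioms and the width. Afterwards adjacent bags are incomparable, so every adhesion $S_{tt'}=\chi(t)\cap\chi(t')$ is a proper subset of a bag and has $|S_{tt'}|\le k$. Recall also that $S_{tt'}$ separates $\bigcup_{s\in T_t}\chi(s)\setminus S_{tt'}$ from $\bigcup_{s\in T_{t'}}\chi(s)\setminus S_{tt'}$, where $T_t,T_{t'}$ are the two components of $T-tt'$.

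\textbf{Step 2 (orient the tree).} For each edge $tt'$, if $G\setminus S_{tt'}$ has a component $C$ with $w(C)>w(G)/2$, orient $tt'$ towards the side of $T$ whose bags contain $C$. If some edge receives no orientation, then $X=S_{tt'}$ is the required separator and we are done. There is at most one heavy component, and it lies entirely on one side by the separation property, so every edge gets exactly one orientation. Now walk along the arrows. Two arrows pointing away from each other would place the two heavy components on disjoint vertex sets, giving total weight $>w(G)$, which is impossible; the same disjointness argument shows that a walk along arrows cannot reverse. Since $T$ is finite, the walk stops at a sink $t$: every edge at $t$ points into $t$. In particular every component of $G\setminus\chi(t)$ has weight at most $w(G)/2$, since any such component avoids $\chi(t)$ and so lies in a single branch $T_{t'}$, on the light side of the adhesion $S_{tt'}$.

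\textbf{Step 3 (shrink the sink bag; the main obstacle).} If $|\chi(t)|\le k$ we take $X=\chi(t)$. Otherwise $|\chi(t)|=k+1$, and I would try $X=\chi(t)\setminus\{v\}$ for a well-chosen $v$. Deleting $v$ from the separator merges $v$ with the components of $G\setminus\chi(t)$ adjacent to it. The main leverage is the following. For each neighbour $t'$ of $t$, the heavy component $C_{t'}$ of $G\setminus S_{tt'}$ lies in the $t$-side, and it must meet $\chi(t)\setminus S_{tt'}$, because otherwise it would be a light component of $G\setminus\chi(t)$. My first attempt would be to choose $v\in\chi(t)\setminus S_{tt'}$ for some neighbour $t'$, for instance the one whose branch $T_{t'}$ carries the most weight. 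I would then bound the merged component $\{v\}\cup(\text{adjacent components})$ by $w(G)$ minus the weight lying behind $S_{tt'}$, and compare this with the heaviness of $C_{t'}$. If $t$ is a leaf this is clean: $C_{t'}\subseteq\chi(t)\setminus S_{tt'}$, and one can move one vertex into the separator. The general case is where I expect the real difficulty. If no single $v$ works directly, the fallback is to first refine the decomposition, adding intermediate bags $\chi(t)\setminus\{v\}$ between $t$ and its neighbours, so that the sink moves onto an adhesion of size $\le k$ and Step 2's no-orientation case applies.
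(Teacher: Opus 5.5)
Your Steps 1 and 2 are sound, and they are already the standard argument behind the cited result: at a sink $t$, every component of $G\setminus\chi(t)$ has weight at most $w(G)/2$. So $X=\chi(t)$, or an adhesion if some edge stays unoriented, is a balanced separator. This is also all the paper itself invokes, when it remarks that one of the bags of a minimum-width tree decomposition is a valid separator.

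The problem is Step 3, and no argument can carry it out. The bound $|X|\le k$ is false for every $k\ge 2$, not only in the degenerate case $k=0$ that you noticed. Take a triangle $abc$ and three more vertices $x,y,z$ with $N(x)=\{a,b\}$, $N(y)=\{b,c\}$, $N(z)=\{c,a\}$. This graph has tree-width $2$: use a central bag $\{a,b,c\}$ with leaf bags $\{a,b,x\}$, $\{b,c,y\}$, $\{c,a,z\}$. Give $a,b,c$ weight $1$ and $x,y,z$ weight $2$, so $w(G)=9$. Deleting two triangle vertices, say $a,b$, leaves the component $\{c,y,z\}$ of weight $5$. Deleting at most one triangle vertex leaves the surviving triangle vertices joined to every surviving outer vertex, which is a component of weight at least $5$. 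Hence no $X$ with $|X|\le 2$ is balanced, while the bag $\{a,b,c\}$ is. In your orientation every edge points into the central bag, and removing any single vertex from that bag fails. More generally, take $K_{k+1}$ with weight zero on the clique, and for each $k$-subset add two new weight-one vertices adjacent to exactly that subset. This graph has tree-width $k$ and no balanced separator of size $k$, for any $k\ge 2$.

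So the statement as printed has an off-by-one error. Lemma 7.19 of Cygan et al.\ gives a separator of size at most $k+1$, and that is exactly what your Steps 1--2 prove. The right repair is to drop Step 3, along with the intermediate-bag fallback, and conclude $|X|\le k+1$. Downstream, this only changes the branching factor in the tree-width algorithms from $2^{\tw(D)}$ to $2^{\tw(D)+1}$.
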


\begin{lemma}\label{lemmma: low-rank sep}
    If a graph $G=(V,E)$ has $\rw(G)=k$, then for any weight function, there exists a partition $(X,V\setminus X)$ of $V$ such that $\rho_G(X) \leq k$ and such that both $X$ and $V\setminus X$ have at most $2/3$ the weight of $G$.
\end{lemma}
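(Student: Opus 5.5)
Take a rank-decomposition $(T,\mu)$ of $G$ of width $k$, which exists by definition of $\rw(G)$. Every edge $e$ of $T$ gives a partition $(A_e,B_e)$ of $V$ with $\rho_G(A_e)\le k$. So it is enough to find one edge $e$ for which both $w(A_e)$ and $w(B_e)$ are at most $\tfrac23 w(G)$. We then set $X=A_e$. Write $W=w(G)$. If $W=0$, or if $T$ has no edges (so $|V|\le 1$), the statement is trivial: take any edge, or take $X=\emptyset$, since $\rho_G(\emptyset)=0$. The case of $T$ with a single edge is also immediate, with one vertex on each side.

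The main step is a standard centroid-style walk on $T$. Orient every edge $e=\{s,t\}$ of $T$ towards the side carrying strictly more than $\tfrac23 W$ of the weight, if such a side exists. Suppose, for contradiction, that every edge is oriented. At most one side of an edge can exceed $\tfrac23 W > \tfrac12 W$, so each edge gets exactly one direction. Start anywhere and follow the orientations. We never immediately reverse along the same edge. Since $T$ is finite and acyclic, the walk must stop at a node $t$ all of whose incident edges point into $t$.

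There are two cases for $t$.

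\emph{$t$ is a leaf.} Its single edge $e$ has the side $\{\mu^{-1}(t)\}$ of weight more than $\tfrac23 W$. The other side is then the component on the far side, of weight less than $\tfrac13 W$. But the edge points into $t$, which means the singleton side is the heavy one. So the leaf vertex carries more than $\tfrac23 W$ by itself.

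This is the point I expect to be the main obstacle. The lemma as stated genuinely fails if a single vertex has weight above $\tfrac23 W$, because then no partition with both sides at most $\tfrac23 W$ exists. I would handle it by adding the standing assumption $w(v)\le \tfrac13 W$ for all $v$. This holds for the weightings used later, e.g.\ unit weight on non-Clifford spiders when $T\ge 3$, and it rules out this case. Alternatively, one can read the lemma as allowing the side containing such a heavy vertex to be just that single vertex.

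\emph{$t$ is internal, of degree $3$.} Deleting $t$ splits the leaves into three groups $L_1,L_2,L_3$, one for each incident edge $e_i$. Orientation into $t$ means that the side of $e_i$ containing $t$ has weight more than $\tfrac23 W$. That side is exactly $V\setminus L_i$, so $w(L_i)<\tfrac13 W$ for each $i$. Summing gives $W=w(L_1)+w(L_2)+w(L_3)<W$, a contradiction.

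Hence, under the vertex-weight assumption, some edge $e$ is unoriented, and both $A_e$ and $B_e$ have weight at most $\tfrac23 W$. Since $\rho_G(A_e)\le k$ by the width bound, the lemma follows.
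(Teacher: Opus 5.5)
Your argument is the paper's own proof: orient each edge of a minimum-width rank-decomposition toward the side of weight more than $\frac23 w(G)$, follow the orientations to a sink, and obtain the contradiction $w(L_1)+w(L_2)+w(L_3)<w(G)$ at an internal node.

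Your treatment of the leaf case is more careful than the paper's, which simply asserts that the sink ``cannot be a leaf.'' A leaf sink arises exactly when some vertex has weight above $\frac23 w(G)$. In that case the statement is false (take one vertex of weight $1$ and all others of weight $0$), so an extra hypothesis is genuinely needed.

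Your own leaf-case analysis shows that the sharp hypothesis is $w(v)\le\frac23 w(G)$ for every $v$, not $\frac13$. Use the $\frac23$ bound. With unit weights on two non-Clifford spiders, each vertex has weight $\frac12 w(G)$, and \cref{algo: rank-width non clifford} does reach this case, since its base case is $NC(D)=1$. Your $\frac13$ bound would exclude it.

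Your ``trivial'' cases, $|V|\le 1$ and a one-edge $T$, are not trivial when $w(G)>0$: with $X=\emptyset$, the side $V\setminus X$ carries all the weight. They are instances of the heavy-vertex exception, so they are covered by the same hypothesis rather than by a separate argument.
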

\begin{proof}
    Without loss of generality, let $w(G)=1$, and assume that no such partition exists. Let $(T,\mu)$ be the minimum-width rank-decomposition. We define an orientation of $T$ where every edge $e$ points toward the heaviest of $A_e$ and $B_e$. Every edge must be oriented towards a subtree with weight more than $2/3$, otherwise the partition $(A_e, B_e)$ would violate our assumption. Since every oriented tree has a vertex with no outgoing edges, let $v$ be such a vertex. $v$ cannot be a leaf so let $e_1,e_2,e_3$ be its adjacent edges, and $A_{e_i}$ be the side of the partition induced by $e_i$ not containing $v$. Since $A_{e_1},A_{e_2},A_{e_3}$ forms a partition of $V$, there is an $e_i$, say $e_1$, with $w(A_{e_1})\geq 1/3$. But since $e_1$ is oriented towards $v$, we also have that $w(A_{e_1}) < 1/3$, which is a contradiction.
\end{proof}

Note that the proof of Lemma \ref{lemmma: low-rank sep} is constructive; it shows that, given a minimal rank-decomposition $(T, \mu)$, there must exist an edge $e \in E(T)$ for which $(A_e, B_e)$ satisfies the conditions of the lemma. Such a partition can therefore be found easily by picking $e$ that minimizes $|w(A_e) - w(B_e)| = |2w(A_e) - w(G)|$. Moreover, this can be done in linear time: \cite[Section 5.3]{nouwt2022simulated} shows how to find an ordering of $V(G)$ for which every $A_e$ is contiguous; the prefix sum of the weight function with respect to this ordering can be computed to determine $w(A_e)$ for every edge $e$. Similarly, for Lemma \ref{lemma: small separator}, one of the bags of the minimum-width tree decomposition is a valid separator \cite{cygan2015parameterized}. 

We note that the relationship between both width parameters and balanced separators holds in both directions, up to constant factors. In particular, the existence of small balanced separators for every weight function implies bounded tree-width \cite{GraphMinorsII}. Similarly, the existence of balanced separators of low rank implies bounded rank-width \cite{rankwidthbalancedseparations}.
We complete this section with some linear algebra notation that will be useful.

\begin{definition}
    Let \( G = (V, E) \) be a graph and let $X,\subseteq V$. Let $A,B$ be subsets of $X$ and $V\setminus X$ respectively. We define the \emph{complete bipartite submatrix} $M_{X, V \setminus X}(A,B)$ as the matrix where \( (M_{X, V \setminus X}(A,B))_{x,y} = 1 \) iff \( x \in A \) and \( y \in B\).
\end{definition}

\begin{definition}
    Given a graph $G = (V, E)$ and a subset $X \subseteq V$, a \emph{bipartite decomposition} of size $k$ for the biadjacency matrix $M_{X, V \setminus X}$ consists of $k$ pairs of subsets $\{(A_1, B_1), \dots, (A_k, B_k)\}$, with each $A_i \subseteq X$ and $B_i \subseteq V\setminus X$, such that
    $$ M_{X,V\setminus X} = \sum_{i=1}^k M_{X, V \setminus X}(A_i,B_i) $$
    where the sum is taken over $\mathbb{F}_2$. 
\end{definition}

\begin{lemma}
    For every $G = (V, E)$ and $X \subseteq V$, $M_{X, V \setminus X}$ has a bipartite decomposition of size $\rho_G(X)$.
\end{lemma}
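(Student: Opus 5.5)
The idea is to use the standard fact that a matrix of rank $r$ over a field is a sum of $r$ rank-one matrices. Over $\mathbb{F}_2$, every rank-one matrix is exactly a complete bipartite submatrix.

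Let $M = M_{X,V\setminus X}$ and let $r = \rho_G(X) = \mathrm{rank}_{\mathbb{F}_2}(M)$. If $r = 0$ then $M$ is the zero matrix, and the empty decomposition (size $0$) works. Otherwise, choose a basis $u_1,\dots,u_r \in \mathbb{F}_2^{V\setminus X}$ of the row space of $M$. Each row $M_x$ (for $x\in X$) can then be written uniquely as $M_x = \sum_{i=1}^r c_{x,i} u_i$ with $c_{x,i}\in\mathbb{F}_2$. In matrix form this reads $M = \sum_{i=1}^r c_i u_i^{T}$, where $c_i\in\mathbb{F}_2^X$ is the column vector $(c_{x,i})_{x\in X}$. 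This is the usual rank factorization $M = CU$ with $C$ of size $|X|\times r$ and $U$ of size $r\times |V\setminus X|$; equivalently, one can obtain it from Gaussian elimination.

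Next we translate each outer product into sets. Define $A_i = \{x\in X : c_{x,i}=1\}$ and $B_i = \{y\in V\setminus X : (u_i)_y = 1\}$. Over $\mathbb{F}_2$, the entry $(c_i u_i^T)_{x,y} = c_{x,i}(u_i)_y$ equals $1$ if and only if $x\in A_i$ and $y\in B_i$. Hence $c_iu_i^T = M_{X,V\setminus X}(A_i,B_i)$, and summing over $i$ gives $M = \sum_{i=1}^r M_{X,V\setminus X}(A_i,B_i)$. This is a bipartite decomposition of size $r=\rho_G(X)$.

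None of the steps is a real obstacle. The only point needing care is the identification of rank-one $\{0,1\}$-matrices with complete bipartite submatrices, and this works precisely because the entries lie in $\mathbb{F}_2$, where a product of entries is $1$ exactly when both factors are $1$. A minor check is that the sets $A_i, B_i$ are nonempty, which follows because basis vectors are nonzero and every basis vector is actually used by some row. This is not even required by the definition of a bipartite decomposition.
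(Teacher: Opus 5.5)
Your proposal is correct and follows the paper's approach: the paper simply notes that the claim follows from the definition of rank over $\mathbb{F}_2$ (constructible by Gaussian elimination). You spell this out explicitly via the rank factorization $M = \sum_{i=1}^r c_i u_i^{T}$ and the identification of each $\{0,1\}$ outer product $c_i u_i^{T}$ with the complete bipartite submatrix $M_{X,V\setminus X}(A_i,B_i)$.
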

\begin{proof}
    This follows from the definition of rank over $\mathbb{F}_2$, and it can be constructed via Gaussian elimination \cite{Piziak1999}.
\end{proof}

\subsection{ZX-Calculus Prerequisites}
\label{sec:zx-prereqs}

The following observation will be useful.  By the definition of a green vertex, i.e.~\emph{Z-spiders}, as a linear map~\eqref{eq:def-Z-spider} implies that we can decompose it as a sum of diagrams containing \emph{X-spiders} (red vertices) via~\eqref{eq:state-ZX}:
\begin{equation}\label{eq:vertex-cut}
    \begin{tikzpicture}
	\begin{pgfonlayer}{nodelayer}
		\node [style=Z phase dot] (0) at (0, 0) {$\alpha$};
		\node [style=none] (3) at (1.25, 1) {};
		\node [style=none] (4) at (1.25, -1) {};
		\node [style=none] (6) at (1, 0.25) {$\vdots$};
		\node [style=none] (7) at (2.25, 0) {$=$};
		\node [style=none] (11) at (5.5, 1) {};
		\node [style=none] (12) at (5.5, -1) {};
		\node [style=none] (14) at (4.75, 0.25) {$\vdots$};
		\node [style=X dot] (16) at (4.25, 1) {};
		\node [style=X dot] (18) at (4.25, -1) {};
		\node [style=none] (19) at (6.25, 0) {$+$};
		\node [style=none] (22) at (10.25, 1) {};
		\node [style=none] (23) at (10.25, -1) {};
		\node [style=none] (25) at (10.25, 0.25) {$\vdots$};
		\node [style=X phase dot] (27) at (9, 1) {$\pi$};
		\node [style=X phase dot] (29) at (9, -1) {$\pi$};
		\node [style=none] (30) at (7.75, 0) {$\frac{e^{i\alpha}}{\sqrt{2}^k}$};
		\node [style=none] (31) at (3.25, 0) {};
		\node [style=none] (32) at (3.375, 0) {$\frac{1}{\sqrt{2}^k}$};
	\end{pgfonlayer}
	\begin{pgfonlayer}{edgelayer}
		\draw [in=180, out=75] (0) to (3.center);
		\draw [in=-180, out=-75] (0) to (4.center);
		\draw (16) to (11.center);
		\draw (18) to (12.center);
		\draw (27) to (22.center);
		\draw (29) to (23.center);
	\end{pgfonlayer}
\end{tikzpicture} \quad \forall \alpha \in [0,2\pi]
\end{equation}
We call this equation the \emph{vertex cut operation}, because, combined with simplifications, it can be used to remove vertices from a diagram. We will be interested in a particular form of ZX-diagrams:
\begin{definition}
A \emph{graph-like ZX-diagram} is a diagram where all spiders are green, and all edges except output and input wires are H-edges.
\end{definition}
The first step of our algorithms will be to transform our diagram into this form. It is possible to transform any ZX-diagram into a Graph-like form without increasing the tree-width of the underlying diagram. One only needs to perform the colour change rule on every red spider and get rid of every non-H-edge by using spider fusion. The underlying graph is only modified through edge contractions under which the tree-width is monotone \cite{GraphMinorsII}. It is unclear what effect this transformation has on rank-width, as it is not well-behaved under this operation, but some control is maintained, as rank-width is always upper-bounded by tree-width \cite{oum_rank-width_2008}. For graph-like ZX-diagrams, the two following equations \cite{pivotingZX} hold:
\begin{equation}\label{eq: local complementation}
    \begin{tikzpicture}
	\begin{pgfonlayer}{nodelayer}
		\node [style=Z phase dot] (0) at (-5, 0) {$\ \pm\frac\pi2\ $};
		\node [style=Z phase dot] (2) at (-7.5, -1.25) {$\alpha_1$};
		\node [style=Z phase dot] (4) at (-2.25, -1.25) {$\alpha_n$};
		\node [style=none] (5) at (-8.5, -2.25) {};
		\node [style=none] (6) at (-7.5, -2.25) {};
		\node [style=none] (7) at (-2.25, -2.25) {};
		\node [style=none] (8) at (-1.25, -2.25) {};
		\node [style=none] (9) at (-5, -2) {...};
		\node [style=none] (10) at (-8, -2.25) {...};
		\node [style=none] (11) at (-1.75, -2.25) {...};
		\node [style=none] (16) at (0, -1.5) {$\approx$};
		\node [style=none] (17) at (9.5, -1.25) {};
		\node [style=none] (18) at (1.25, -1.25) {};
		\node [style=none] (19) at (9, -1.25) {...};
		\node [style=none] (20) at (8.5, -1.25) {};
		\node [style=Z phase dot] (24) at (2.25, -0.25) {$\ \alpha_1\!\mp\!\frac\pi2\ $};
		\node [style=none] (26) at (2.25, -1.25) {};
		\node [style=none] (27) at (1.75, -1.25) {...};
		\node [style=Z phase dot] (30) at (8.5, -0.25) {$\ \alpha_n \!\mp\! \frac\pi2\ $};
		\node [style=none] (31) at (-6.5, -4.25) {};
		\node [style=Z phase dot] (32) at (-6.5, -3.25) {$\alpha_2$};
		\node [style=none] (33) at (-7, -4.25) {...};
		\node [style=none] (34) at (-7.5, -4.25) {};
		\node [style=Z phase dot] (35) at (-3.25, -3.25) {$\ \alpha_{n\!-\!1}\ $};
		\node [style=none] (36) at (-3.25, -4.25) {};
		\node [style=none] (37) at (-2.25, -4.25) {};
		\node [style=none] (38) at (-2.75, -4.25) {...};
		\node [style=Z phase dot] (39) at (3.75, -2.75) {$\ \alpha_2\!\mp\!\frac\pi2\ $};
		\node [style=none] (40) at (3.25, -3.75) {...};
		\node [style=none] (41) at (2.75, -3.75) {};
		\node [style=none] (42) at (3.75, -3.75) {};
		\node [style=Z phase dot] (43) at (6.75, -2.75) {$\ \alpha_{n\!-\!1} \!\mp\! \frac\pi2\ $};
		\node [style=none] (44) at (7.75, -3.75) {};
		\node [style=none] (45) at (7.25, -3.75) {...};
		\node [style=none] (46) at (6.75, -3.75) {};
		\node [style=none] (47) at (5.25, -1) {...};
		\node [style=none] (48) at (0, -1) {(LC)};
	\end{pgfonlayer}
	\begin{pgfonlayer}{edgelayer}
		\draw (5.center) to (2);
		\draw (6.center) to (2);
		\draw (7.center) to (4);
		\draw (8.center) to (4);
		\draw [style=hadamard edge] (2) to (0);
		\draw [style=hadamard edge] (4) to (0);
		\draw (18.center) to (24);
		\draw (26.center) to (24);
		\draw (20.center) to (30);
		\draw (17.center) to (30);
		\draw (34.center) to (32);
		\draw (31.center) to (32);
		\draw (36.center) to (35);
		\draw (37.center) to (35);
		\draw (41.center) to (39);
		\draw (42.center) to (39);
		\draw (46.center) to (43);
		\draw (44.center) to (43);
		\draw [style=hadamard edge] (24) to (30);
		\draw [style=hadamard edge] (30) to (43);
		\draw [style=hadamard edge] (43) to (39);
		\draw [style=hadamard edge] (39) to (24);
		\draw [style=hadamard edge] (24) to (43);
		\draw [style=hadamard edge] (39) to (30);
		\draw [style=hadamard edge] (0) to (32);
		\draw [style=hadamard edge] (0) to (35);
	\end{pgfonlayer}
\end{tikzpicture}
\end{equation}
and
\begin{equation}\label{eq: pivot}
    \begin{tikzpicture}
	\begin{pgfonlayer}{nodelayer}
		\node [style=Z phase dot] (0) at (-8.5, 2.25) {$j\pi$};
		\node [style=Z phase dot] (2) at (-10.25, 1.75) {$\alpha_1$};
		\node [style=none] (16) at (0, 0) {$\approx$};
		\node [style=Z phase dot] (32) at (-10.25, 0.25) {$\alpha_n$};
		\node [style=Z phase dot] (37) at (-6.5, -1.25) {$\beta_m$};
		\node [style=Z phase dot] (41) at (-6.5, 0.25) {$\beta_1$};
		\node [style=Z phase dot] (45) at (-3, 1.75) {$\gamma_1$};
		\node [style=Z phase dot] (49) at (-3, 0.25) {$\gamma_l$};
		\node [style=Z phase dot] (51) at (-4.5, 2.25) {$k\pi$};
		\node [style=none] (52) at (-10.25, 1) {...};
		\node [style=none] (53) at (-6.5, -0.5) {...};
		\node [style=none] (54) at (-3, 1) {...};
		\node [style=Z phase dot] (55) at (2.75, 0.75) {$\ \alpha_n+k\pi\ $};
		\node [style=Z phase dot] (56) at (6.5, -1.25) {$\ \beta_1+(j+k+1)\pi\ $};
		\node [style=none] (57) at (6.5, -2) {...};
		\node [style=Z phase dot] (58) at (6.5, -2.75) {$\ \beta_m+(j+k+1)\pi\ $};
		\node [style=Z phase dot] (59) at (10.25, 2.25) {$\ \gamma_1+j\pi\ $};
		\node [style=Z phase dot] (60) at (2.75, 2.25) {$\ \alpha_1+k\pi\ $};
		\node [style=none] (61) at (10.25, 1.5) {...};
		\node [style=none] (62) at (2.75, 1.5) {...};
		\node [style=Z phase dot] (63) at (10.25, 0.75) {$\ \gamma_l+j\pi\ $};
		\node [style=none] (64) at (-9.75, 2.5) {};
		\node [style=none] (65) at (-10.75, 2.5) {};
		\node [style=none] (66) at (-10.25, 2.5) {...};
		\node [style=none] (67) at (-10.75, -0.5) {};
		\node [style=none] (68) at (-10.25, -0.5) {...};
		\node [style=none] (69) at (-9.75, -0.5) {};
		\node [style=none] (70) at (-7, -2) {};
		\node [style=none] (71) at (-6.5, -2) {...};
		\node [style=none] (72) at (-6, -2) {};
		\node [style=none] (73) at (-7.25, 1.25) {};
		\node [style=none] (74) at (-6.5, 1) {...};
		\node [style=none] (75) at (-5.75, 1.25) {};
		\node [style=none] (76) at (-3.5, 2.5) {};
		\node [style=none] (77) at (-3, 2.5) {...};
		\node [style=none] (78) at (-2.5, 2.5) {};
		\node [style=none] (79) at (-3, -0.5) {...};
		\node [style=none] (80) at (-3.5, -0.5) {};
		\node [style=none] (81) at (-2.5, -0.5) {};
		\node [style=none] (82) at (2.25, 3) {};
		\node [style=none] (83) at (2.75, 3) {...};
		\node [style=none] (84) at (3.25, 3) {};
		\node [style=none] (85) at (2.75, 0) {...};
		\node [style=none] (86) at (2.25, 0) {};
		\node [style=none] (87) at (3.25, 0) {};
		\node [style=none] (88) at (6, -0.5) {};
		\node [style=none] (89) at (6.5, -0.5) {...};
		\node [style=none] (90) at (7, -0.5) {};
		\node [style=none] (91) at (6.5, -3.5) {...};
		\node [style=none] (92) at (6, -3.5) {};
		\node [style=none] (93) at (7, -3.5) {};
		\node [style=none] (94) at (9.75, 3) {};
		\node [style=none] (95) at (10.25, 3) {...};
		\node [style=none] (96) at (10.75, 3) {};
		\node [style=none] (97) at (10.25, 0) {...};
		\node [style=none] (98) at (9.75, 0) {};
		\node [style=none] (99) at (10.75, 0) {};
		\node [style=none] (100) at (0, 0.625) {(P)};
	\end{pgfonlayer}
	\begin{pgfonlayer}{edgelayer}
		\draw [style=hadamard edge] (2) to (0);
		\draw [style=hadamard edge] (0) to (32);
		\draw [style=hadamard edge] (0) to (51);
		\draw [style=hadamard edge, bend right] (0) to (37);
		\draw [style=hadamard edge, bend right] (0) to (41);
		\draw [style=hadamard edge, bend left] (51) to (37);
		\draw [style=hadamard edge, bend left, looseness=0.75] (51) to (41);
		\draw [style=hadamard edge] (51) to (45);
		\draw [style=hadamard edge] (51) to (49);
		\draw [style=hadamard edge] (60) to (56);
		\draw [style=hadamard edge] (60) to (58);
		\draw [style=hadamard edge] (55) to (56);
		\draw [style=hadamard edge] (55) to (58);
		\draw [style=hadamard edge] (59) to (56);
		\draw [style=hadamard edge] (59) to (58);
		\draw [style=hadamard edge] (56) to (63);
		\draw [style=hadamard edge] (63) to (58);
		\draw [style=hadamard edge] (60) to (59);
		\draw [style=hadamard edge] (60) to (63);
		\draw [style=hadamard edge] (55) to (63);
		\draw [style=hadamard edge] (55) to (59);
		\draw (64.center) to (2);
		\draw (2) to (65.center);
		\draw (67.center) to (32);
		\draw (69.center) to (32);
		\draw (70.center) to (37);
		\draw (72.center) to (37);
		\draw (41) to (75.center);
		\draw (41) to (73.center);
		\draw (45) to (78.center);
		\draw (45) to (76.center);
		\draw (49) to (80.center);
		\draw (49) to (81.center);
		\draw (60) to (82.center);
		\draw (60) to (84.center);
		\draw (59) to (94.center);
		\draw (59) to (96.center);
		\draw (98.center) to (63);
		\draw (99.center) to (63);
		\draw (86.center) to (55);
		\draw (87.center) to (55);
		\draw (56) to (88.center);
		\draw (56) to (90.center);
		\draw (92.center) to (58);
		\draw (93.center) to (58);
	\end{pgfonlayer}
\end{tikzpicture}
\end{equation}
These are called \emph{local complementation} and \emph{pivoting}.
These two operations have been used, very effectively, to simplify ZX-diagrams by removing Clifford spiders. In particular, they allow one to give a fully diagrammatic proof of the Gottesman-Knill theorem \cite{cliffsimp}. On graph-like ZX diagrams, these operations act on the underlying graph as \emph{vertex minors}, and do not increase the rank-width \cite{vertexminor}, although they may increase the tree-width. 

\section{An Algorithm Scaling with the Tree-Width}
\label{sec:tree-width}

We start by giving an algorithm that will strongly simulate a quantum circuit in a time that scales exponentially with the tree-width. Even though this is, more often than not, slower than what we will later show, it encompasses most of the ideas needed. 
Consider  \cref{algo: tree width}.

\begin{algorithm}[ht]
\begin{algorithmic}[1]
\If {$|D|=1$}
     \State \Return the value of $D$
\EndIf
\State  $D = D$ in graph-like form
\State $S:= $ balanced separator of $D$ with uniform weights such that $|S|\leq \tw(D)$
\State $acc:=0$
\For{$p_1,\dots,p_{|S|} =0,1$}
    \State $D':= D$ where each $s_i\in S$ is replaced by a red spider with phase $p_i\pi$ (applying \eqref{eq:vertex-cut})
    \State $D' = D'$ where all the red spiders are colour-changed to green and fused with their neighbours
    \State $acc = acc~+$ recurse on $D'$ 
\EndFor
\State \Return $acc$
\end{algorithmic}
\caption{Simulation scaling with the tree-width.}
\label{algo: tree width}

\end{algorithm}

The number of times this algorithm is called to solve an instance of size $n$ follows the recursion $T(n) \leq 2^{\tw(D)}T(n/2)$, from which we see $T(n) \leq n^{\tw(D)}$.
Although computing tree-width is NP-hard, it is fixed-parameter tractable, and can be solved in linear time for fixed parameters \cite{fptTreewidth}. In practice, there exist many efficient algorithms that give `good' tree decompositions (see, for example, \cite{gray2021hyper}).
Since the decomposition phase is computationally negligible compared to the main algorithm, we omit its cost from the leading term, resulting in a total runtime of $\tilde{O}(n^{\tw(D)})$. 
Here, the tree-width of the original diagram is used to upper-bound the branching factor, but as the diagrams get smaller, one would expect the tree-width to also decrease; this effect is not captured by the running time bound presented here.

Drawing on ideas from the diagramatic proof of the Gottesman-Knill theorem \cite{cliffsimp}, it is possible to significantly improve upon \cref{algo: tree width} with some minor changes. Let $NC(D)$ be the number of non-Clifford spiders in a ZX-diagram $D$ (that is, spiders with phases that are not multiples of $\pi/2$). Consider \cref{algo: tree width clifford}.

\begin{algorithm}[ht]
\begin{algorithmic}[1]
\If {$NC(D)=0$}
     \State \Return the value of $D$ using the Gottesman-Knill theorem
\EndIf
\State $D = D$ in graph-like form
\State $w :=$ the weight function with zero weight for Clifford spiders and uniform non-zero weight otherwise
\State $S := $ balanced separator of $D$, weighted by $w$, such that $|S|\leq \tw(D)$
\State $acc:=0$
\For{$p_1,\dots,p_{|S|} =0,1$}
    \State Let $D':= D$ where each $s_i\in S$ is replaced by a red spider with phase $p_i\pi$ (applying \eqref{eq:vertex-cut})
    \State $D' = D'$ where all the red spiders are colour-changed to green and fused with their neighbours
    \State $acc = acc~+ $ recurse on $D'$ 
\EndFor
\State \Return $acc$
\end{algorithmic}
\caption{Simulation scaling with the tree-width and the number of non-Clifford tensors.}
\label{algo: tree width clifford}

\end{algorithm}

Repeating the same analysis as above shows that the time complexity of the algorithm is
$\tilde{O}\big(NC(D)^{\tw(D)}\big)$. Therefore, this algorithm yields an alternative proof that quantum circuit simulation scales exponentially with the tree-width~\cite{treewidthContraction}. While the running time presented here is suboptimal compared to the $\tilde{O}(\exp(\tw(D)))$ obtained by~\cite{treewidthContraction}, our approach requires only linear memory, is highly parallelizable, and incorporates knowledge about the non-Clifford gates.

For fixed tree-width and a large number of non-Clifford gates, Algorithm \ref{algo: tree width clifford} compares favourably with stabilizer decomposition methods that do not (directly) exploit structural information about the underlying graph. Such approaches typically scale as $\tilde{O}(2^{\alpha\, NC(D)})$, for some constant $\alpha$ depending on the circuit class (see, for example,~\cite{kissinger2022classical}). However, below the threshold
$$NC(D) \approx \frac{\tw(D)}{\alpha}\log\left(\frac{\tw(D)}{\alpha}\right)$$
these other methods become more efficient.
This observation suggests that incorporating an early-stopping criterion may be beneficial. Instead of terminating the recursion only when $NC(D)=0$, one can stop at the above threshold and switch to a stabilizer decomposition algorithm. This hybrid strategy improves the polynomial factors in the running time. In practice, the optimal threshold should be tuned empirically, as polynomial overheads are non-negligible.

\section{An Algorithm Scaling with the Rank-Width}
\label{sec:rank-width}

Here, we will aim to use a very similar approach as with tree-width, namely, directly using efficient separators instead of the structure of the decomposition emerging from the control of the parameter. Just like for tree-width, where we could delete vertices of the separator one at a time using \eqref{eq:vertex-cut}, we will need an operation that will decrease the rank of a cut while leaving the rest of the diagram unchanged at the cost of increasing the number of diagrams to consider.
The idea will be to introduce two adjacent spiders connected in an appropriate way to the rest of the ZX-diagram and then apply \eqref{eq: pivot}. To make sure that the newly introduced spiders do not change the value of our diagram, we will make use of the $\pi$-copy rule. Let $A,B$ be subsets of spiders in which we would like to add $H$-edges between every pair of spiders. Then we can write this as a sum of diagrams:
\begin{equation}\label{eq:bipartite sum}
    \begin{tikzpicture}
	\begin{pgfonlayer}{nodelayer}
		\node [style=Z dot] (0) at (2.5, 9.5) {};
		\node [style=none] (16) at (-2, -4) {$\approx$};
		\node [style=Z dot] (51) at (5, 9.5) {};
		\node [style=none] (100) at (-2, -3.375) {(P)};
		\node [style=Z phase dot] (102) at (-10.25, 9) {$\alpha_1$};
		\node [style=Z phase dot] (103) at (-10.25, 7) {$\alpha_n$};
		\node [style=none] (107) at (-10.25, 8.25) {$\vdots$};
		\node [style=none] (109) at (-11.5, 9.75) {};
		\node [style=none] (110) at (-11.5, 8.5) {};
		\node [style=none] (111) at (-11.5, 9.25) {$\vdots$};
		\node [style=none] (112) at (-11.5, 7.5) {};
		\node [style=none] (113) at (-11.5, 7) {$\vdots$};
		\node [style=none] (114) at (-11.5, 6.25) {};
		\node [style=none] (121) at (-2, 8) {$\approx$};
		\node [style=none] (122) at (-2, 8.625) {($\pi$-copy)};
		\node [style=X dot] (123) at (2.5, 10.75) {};
		\node [style=X dot] (124) at (5, 10.75) {};
		\node [style=none] (145) at (-2, 1.75) {$=$};
		\node [style=none] (146) at (-2, 2.375) {(SF)};
		\node [style=none] (149) at (-14, 2) {$\sum\limits_{j,k\in\{0,1\}}$};
		\node [style=none] (150) at (-16.25, 2) {$\approx$};
		\node [style=none] (171) at (-11, 10.5) {};
		\node [style=none] (172) at (-9.5, 10.5) {};
		\node [style=none] (173) at (-10.3, 11) {$A$};
		\node [style=none] (174) at (-5.5, 10.5) {};
		\node [style=none] (175) at (-4.25, 10.5) {};
		\node [style=none] (176) at (-4.875, 11) {$B$};
		\node [style=none] (178) at (0.25, -3.75) {$\sum\limits_{j,k\in\{0,1\}}$};
		\node [style=Z phase dot] (179) at (-5, 9) {$\gamma_1$};
		\node [style=Z phase dot] (180) at (-5, 7) {$\gamma_l$};
		\node [style=none] (181) at (-5, 8.25) {$\vdots$};
		\node [style=none] (182) at (-3.75, 9.75) {};
		\node [style=none] (183) at (-3.75, 8.5) {};
		\node [style=none] (184) at (-3.75, 9.25) {$\vdots$};
		\node [style=none] (185) at (-3.75, 7.5) {};
		\node [style=none] (186) at (-3.75, 7) {$\vdots$};
		\node [style=none] (187) at (-3.75, 6.25) {};
		\node [style=Z phase dot] (188) at (1, 9) {$\alpha_1$};
		\node [style=Z phase dot] (189) at (1, 7) {$\alpha_n$};
		\node [style=none] (190) at (1, 8.25) {$\vdots$};
		\node [style=none] (191) at (-0.25, 9.75) {};
		\node [style=none] (192) at (-0.25, 8.5) {};
		\node [style=none] (193) at (-0.25, 9.25) {$\vdots$};
		\node [style=none] (194) at (-0.25, 7.5) {};
		\node [style=none] (195) at (-0.25, 7) {$\vdots$};
		\node [style=none] (196) at (-0.25, 6.25) {};
		\node [style=Z phase dot] (203) at (6.25, 9) {$\gamma_1$};
		\node [style=Z phase dot] (204) at (6.25, 7) {$\gamma_l$};
		\node [style=none] (205) at (6.25, 8.25) {$\vdots$};
		\node [style=none] (206) at (7.5, 9.75) {};
		\node [style=none] (207) at (7.5, 8.5) {};
		\node [style=none] (208) at (7.5, 9.25) {$\vdots$};
		\node [style=none] (209) at (7.5, 7.5) {};
		\node [style=none] (210) at (7.5, 7) {$\vdots$};
		\node [style=none] (211) at (7.5, 6.25) {};
		\node [style=Z dot] (212) at (-9.25, 3.25) {};
		\node [style=Z dot] (213) at (-6.75, 3.25) {};
		\node [style=Z phase dot] (214) at (-9.25, 4.5) {$j\pi$};
		\node [style=Z phase dot] (215) at (-6.75, 4.5) {$k\pi$};
		\node [style=Z phase dot] (216) at (-10.75, 2.75) {$\alpha_1$};
		\node [style=Z phase dot] (217) at (-10.75, 0.75) {$\alpha_n$};
		\node [style=none] (218) at (-10.75, 2) {$\vdots$};
		\node [style=none] (219) at (-12, 3.5) {};
		\node [style=none] (220) at (-12, 2.25) {};
		\node [style=none] (221) at (-12, 3) {$\vdots$};
		\node [style=none] (222) at (-12, 1.25) {};
		\node [style=none] (223) at (-12, 0.75) {$\vdots$};
		\node [style=none] (224) at (-12, 0) {};
		\node [style=Z phase dot] (225) at (-5.5, 2.75) {$\gamma_1$};
		\node [style=Z phase dot] (226) at (-5.5, 0.75) {$\gamma_l$};
		\node [style=none] (227) at (-5.5, 2) {$\vdots$};
		\node [style=none] (228) at (-4.25, 3.5) {};
		\node [style=none] (229) at (-4.25, 2.25) {};
		\node [style=none] (230) at (-4.25, 3) {$\vdots$};
		\node [style=none] (231) at (-4.25, 1.25) {};
		\node [style=none] (232) at (-4.25, 0.75) {$\vdots$};
		\node [style=none] (233) at (-4.25, 0) {};
		\node [style=none] (234) at (0.25, 2) {$\sum\limits_{j,k\in\{0,1\}}$};
		\node [style=Z phase dot] (238) at (5, 3.25) {$j\pi$};
		\node [style=Z phase dot] (239) at (7.5, 3.25) {$k\pi$};
		\node [style=Z phase dot] (240) at (3.5, 2.75) {$\alpha_1$};
		\node [style=Z phase dot] (241) at (3.5, 0.75) {$\alpha_n$};
		\node [style=none] (242) at (3.5, 2) {$\vdots$};
		\node [style=none] (243) at (2.25, 3.5) {};
		\node [style=none] (244) at (2.25, 2.25) {};
		\node [style=none] (245) at (2.25, 3) {$\vdots$};
		\node [style=none] (246) at (2.25, 1.25) {};
		\node [style=none] (247) at (2.25, 0.75) {$\vdots$};
		\node [style=none] (248) at (2.25, 0) {};
		\node [style=Z phase dot] (249) at (8.75, 2.75) {$\gamma_1$};
		\node [style=Z phase dot] (250) at (8.75, 0.75) {$\gamma_l$};
		\node [style=none] (251) at (8.75, 2) {$\vdots$};
		\node [style=none] (252) at (10, 3.5) {};
		\node [style=none] (253) at (10, 2.25) {};
		\node [style=none] (254) at (10, 3) {$\vdots$};
		\node [style=none] (255) at (10, 1.25) {};
		\node [style=none] (256) at (10, 0.75) {$\vdots$};
		\node [style=none] (257) at (10, 0) {};
		\node [style=Z phase dot] (260) at (4, -2.75) {$\ \alpha_1+k\pi\ $};
		\node [style=Z phase dot] (261) at (4, -4.75) {$\ \alpha_n+k\pi\ $};
		\node [style=none] (262) at (4, -3.5) {$\vdots$};
		\node [style=none] (263) at (2.25, -2) {};
		\node [style=none] (264) at (2.25, -3.25) {};
		\node [style=none] (265) at (2.25, -2.5) {$\vdots$};
		\node [style=none] (266) at (2.25, -4.25) {};
		\node [style=none] (267) at (2.25, -4.75) {$\vdots$};
		\node [style=none] (268) at (2.25, -5.5) {};
		\node [style=Z phase dot] (269) at (8.25, -2.75) {$\ \gamma_1+j\pi\ $};
		\node [style=Z phase dot] (270) at (8.25, -4.75) {$\ \gamma_l+j\pi\ $};
		\node [style=none] (271) at (8.25, -3.5) {$\vdots$};
		\node [style=none] (272) at (10, -2) {};
		\node [style=none] (273) at (10, -3.25) {};
		\node [style=none] (274) at (10, -2.5) {$\vdots$};
		\node [style=none] (275) at (10, -4.25) {};
		\node [style=none] (276) at (10, -4.75) {$\vdots$};
		\node [style=none] (277) at (10, -5.5) {};
	\end{pgfonlayer}
	\begin{pgfonlayer}{edgelayer}
		\draw [style=hadamard edge] (0) to (51);
		\draw (109.center) to (102);
		\draw (102) to (110.center);
		\draw (112.center) to (103);
		\draw (114.center) to (103);
		\draw [style=simple] (123) to (0);
		\draw [style=simple] (124) to (51);
		\draw [style=brace edge] (171.center) to (172.center);
		\draw [style=brace edge] (174.center) to (175.center);
		\draw (182.center) to (179);
		\draw (179) to (183.center);
		\draw (185.center) to (180);
		\draw (187.center) to (180);
		\draw (191.center) to (188);
		\draw (188) to (192.center);
		\draw (194.center) to (189);
		\draw (196.center) to (189);
		\draw (206.center) to (203);
		\draw (203) to (207.center);
		\draw (209.center) to (204);
		\draw (211.center) to (204);
		\draw [style=hadamard edge] (188) to (0);
		\draw [style=hadamard edge] (0) to (189);
		\draw [style=hadamard edge] (51) to (204);
		\draw [style=hadamard edge] (51) to (203);
		\draw [style=hadamard edge] (212) to (213);
		\draw [style=simple] (214) to (212);
		\draw [style=simple] (215) to (213);
		\draw (219.center) to (216);
		\draw (216) to (220.center);
		\draw (222.center) to (217);
		\draw (224.center) to (217);
		\draw (228.center) to (225);
		\draw (225) to (229.center);
		\draw (231.center) to (226);
		\draw (233.center) to (226);
		\draw [style=hadamard edge] (216) to (212);
		\draw [style=hadamard edge] (212) to (217);
		\draw [style=hadamard edge] (213) to (226);
		\draw [style=hadamard edge] (213) to (225);
		\draw (243.center) to (240);
		\draw (240) to (244.center);
		\draw (246.center) to (241);
		\draw (248.center) to (241);
		\draw (252.center) to (249);
		\draw (249) to (253.center);
		\draw (255.center) to (250);
		\draw (257.center) to (250);
		\draw [style=simple, loop] (238) to ();
		\draw [style=hadamard edge] (240) to (238);
		\draw [style=hadamard edge] (238) to (241);
		\draw [style=simple, loop] (239) to ();
		\draw [style=hadamard edge] (239) to (250);
		\draw [style=hadamard edge] (239) to (249);
		\draw [style=hadamard edge] (238) to (239);
		\draw (263.center) to (260);
		\draw (260) to (264.center);
		\draw (266.center) to (261);
		\draw (268.center) to (261);
		\draw (272.center) to (269);
		\draw (269) to (273.center);
		\draw (275.center) to (270);
		\draw (277.center) to (270);
		\draw [style=hadamard edge] (260) to (269);
		\draw [style=hadamard edge] (269) to (261);
		\draw [style=hadamard edge] (261) to (270);
		\draw [style=hadamard edge] (270) to (260);
	\end{pgfonlayer}
\end{tikzpicture}
\end{equation}
Since, by the Hopf rule (see \cite[Section 9.2]{CKbook}), two green spiders sharing two $H$-edges is equivalent to them not sharing any edges, this indeed adds the $H$-edges (modulo $2$) of a complete bipartite graph between $A$ and $B$ at the cost of creating $4$ diagrams. We call this operation a \emph{complete bipartite sum}, and using this new operation, we can now make a simulation algorithm that scales with the rank-width.

\begin{algorithm}[ht]
\caption{Simulation scaling with the rank-width.}
\label{algo: rank-width}
\begin{algorithmic}[1]
\If {$|D|=1$}
     \State \Return the value of $D$
\EndIf
\State  $D = D$ in graph-like form
\State $X,\bar{X}$:= a $2/3$-balanced partition of $D$ with uniform weights such that $\rho_D(X)\leq \rw(D)$
\State $\{(A_1,B_1), \dots (A_{\rho_D(X)}, B_{\rho_D(X)})\}$ := the minimal bipartite decomposition of $M_{X,\bar{X}}$
\State $acc:=0$
\State $D':= D$ where all the $H$-edges between $X$ and $\bar{X}$ are removed
\For{$k_1,\dots,k_{\rho_D(X)} =0,1$ and $j_i,\dots,j_{\rho_D(X)} =0,1$}
    \State $D'':=D'$
    \For{$i = 1,\dots,\rho_D(X)$}
        \State $D''= D''$ where we add $k_i\pi$ to the phase of all the spiders in $A_i$
        \State $D''= D''$ where we add $j_i\pi$ to the phase of all the spiders in $B_i$
    \EndFor
    \State $acc = acc ~+$ recurse on $D''$ 
\EndFor
\State \Return $acc$
\end{algorithmic}
\end{algorithm}

The number of times this algorithm is called to solve a given instance of size $n$ follows the recursion $T(n) \leq 4^{\rw(D)}T(2n/3)$, from which we find $T(n) \leq O(4^{\tw(D)\log_{3/2}n}) = O(n^{\log_{3/2}(4)\rw(D)})$ (note that $\log_{3/2}(4) = \gamma \approx 3.42$). Similarly to tree-width, computing rank-width is NP-hard \cite{oum2008approximating}, but for every $k$, there exists a polynomial algorithm to find a rank-decomposition with width at most $k$ if it exists \cite{fptRankwidth}. Moreover, there exist heuristics to efficiently find relatively good decomposition (see, for example \cite{heuristicRw,annealingRw}). Therefore, we can compute a rank-decomposition once and then find any balanced separator in linear time. The time to compute the rank-decomposition is thus negligible compared to the rest of the algorithm. Here again, we can improve upon this algorithm by utilizing the Gottesman–Knill algorithm for Clifford diagrams, which leads to \cref{algo: rank-width non clifford}.

\begin{algorithm}[ht]
\caption{Simulation scaling with the rank-width and the number of non-Clifford tensors.}
\label{algo: rank-width non clifford}
\begin{algorithmic}[1]
\If {$NC(D)=1$}
     \State \Return the value of $D$ using the Gottesman-Knill theorem
\EndIf
\State  $D = D$ in graph-like form
\State $w :=$ the weight function with weight zero for Clifford spiders and uniform non-zero weight otherwise
\State $X,\bar{X}$:= a $2/3$-balanced partition of $D$, weighted by $w$, such that $\rho_D(X)\leq \rw(D)$
\State $\{(A_1,B_1), \dots (A_{\rho_D(X)}, B_{\rho_D(X)})\}$ := the minimal bipartite decomposition of $M_{X,\bar{X}}$
\State $acc:=0$
\State $D':= D$ where all the $H$-edges between $X$ and $\bar{X}$ are removed
\For{$k_1,\dots,k_{\rho_D(X)} =0,1$ and $j_i,\dots,j_{\rho_D(X)} =0,1$}
    \State $D'':=D'$
    \For{$i = 1,\dots,\rho_D(X)$}
        \State $D''= D''$ where we add $k_i\pi$ to the phase of all the spiders in $A_i$
        \State $D''= D''$ where we add $j_i\pi$ to the phase of all the spiders in $B_i$
    \EndFor
    \State $acc = acc ~+$ recurse on $D''$ 
\EndFor
\State \Return $acc$
\end{algorithmic}
\end{algorithm}

By the same analysis, we find that this algorithm runs in time $\tilde{O}(NC(D)^{\log_{3/2}(4)\cdot \rw(D)})$. Here again, it would be beneficial to optimize the threshold to switch to a stabilizer decomposition algorithm. Because the states utilized in measurement-based quantum computing (MBQC) are inherently graph-like, this algorithm provides an alternative proof that any universal resource for MBQC must possess unbounded rank-width, which is the central result of \cite{mbqcUnbounded}. Furthermore, while we achieve a bound that is suboptimal in running time to the $\tilde{O}(2^{rw(D)})$ obtained by \cite{van_den_nest_classical_2007}, that method is limited to MBQC, while ours is not. Our approach additionally requires only linear memory, is highly parallelizable, and integrates knowledge regarding non-Clifford gates.

\subsection{Mitigating the Factor of Four for Bipartite Sums}
\label{sec:mitigating}

The complete bipartite sum operation \eqref{eq:bipartite sum} is a very similar idea to the \emph{subgraph complement} operation introduced in \cite{codsi_cutting-edge_nodate}, and it is possible to make a less efficient version of complete bipartite sums using subgraph complement. Indeed, since a pivot can be decomposed into $3$ consecutive local complementations, we could have implemented this operation this way, but with $8$ diagrams instead of $4$.

\begin{question}
    Is it possible to realize a complete bipartite sum in fewer than $4$ diagrams?
\end{question}

On the other hand, the vertex deletion operation \eqref{eq:vertex-cut} creates only $2$ new diagrams. Since this factor appears in the basis of the exponent of the running time, this seemingly small difference can have a huge impact, which might undermine the advantage of utilizing \cref{algo: rank-width non clifford} compared to \cref{algo: tree width clifford}, even though rank-width is always bounded by tree-width (see \cite{oum_rank-width_2008}). To mitigate this effect, we combine the use of \eqref{eq:vertex-cut} and \eqref{eq:bipartite sum}. Let $D$ be a diagram with underlying graph $G=(V,E)$ and $X$. It is possible to achieve the separation $(X,V \sm X)$ by deleting the vertices of either $N(X)$ or $N(V\sm X)$ using the more efficient vertex deletions. In general, we use the following

\begin{definition}
    \label{def:deletionbipartitedecomposition}
    Let $k_1,k_2,k_3\in\naturel$, \( G = (V, E) \) be a graph and $X\subseteq V$ and let $Y=V\sm X$. Then, we say that $M_{X,Y}$ has a \emph{mixed deletion-bipartite decomposition} if there exists $(A_1,B_1),\dots, (A_{k_1},B_{k_1}) \subseteq 2^{X}\times 2^{Y}$, $u_1,\dots,u_{k_2}\in X$ and  $v_1,\dots,v_{k_3}\in Y$ such that 
    \begin{equation}
        M_{X,Y} = \sum_{i=1}^{k_1} M_{X,Y}(A_i,B_i) +   \sum_{i=1}^{k_2} M_{X, Y}(\set{u_i},N(u_i)\cap Y ) +  \sum_{i=1}^{k_3} M_{X, Y}(N(v_i)\cap X, \set{v_i})
    \end{equation}
    where the sum is taken over $\mathbb{F}_2$. The score of such a decomposition is $2k_1 +k_2+k_3$.
\end{definition}
Terms in the second and third sums represent vertex deletion, while the first accounts for bipartite sums, which are twice as costly
(alternatively, consider starting with $M_{X, Y}$ and performing either row deletion, column deletion, or adding sub-matrices full of ones). We define the minimal score for a given $X$:

\begin{definition}
    \label{def:mixed-cut-rank}
    Given $G = (V, E)$ and $X \subseteq V$, the \emph{mixed cut-rank} of $X$ is the minimum score of any mixed deletion-bipartite decomposition of $M_{X,V\setminus X}$.
\end{definition}

It is not obvious \emph{a priori} how to compute the mixed cut-rank. However, given a partition $(X,Y)$, it is easy to compute a mixed deletion-bipartite decomposition with a `good' score. First, note that vertex deletions commute with the other operations, hence they can be performed upfront. Furthermore, each term in the sum can decrease the rank of $M_{X,Y}$ by at most one; by the definition of rank, there is always a way to achieve this reduction via complete bipartite sums. This suggests a greedy approach: delete rows or columns that reduce the rank until none remain, then transition to bipartite sums. In general, this is not optimal.
If we restrict the decomposition such that $k_1 + k_2 + k_3 = \operatorname{rank}(M_{X,Y})$, the optimal score can, in fact, be computed in polynomial time by relating it to the maximum independent sets of a particular bipartite graph; we omit the details here because it is complicated and not usually helpful in practice.

\begin{question}
    Is it possible to efficiently compute the mixed cut-rank?
\end{question}

We will incorporate this into our algorithms by defining a \emph{mixed rank-width}. Algorithms \ref{algo: rank-width} and \ref{algo: rank-width non clifford} can be adapted directly by replacing the bipartite decomposition with the mixed decomposition. 

\begin{definition}
    The \emph{mixed rank-width} of a rank-decomposition is the maximum mixed cut-rank of any partition implied by an edge of the decomposition. The \emph{mixed rank-width} of a graph is the minimal mixed rank-width of any rank-decomposition.
\end{definition}

The $\lceil \frac{n}{3}\rceil$ upper bound for rank-width extends to mixed rank-width, even though the mixed cut-rank can, in general, be twice as large as the usual cut-rank.

\begin{lemma}
    The mixed rank-width of a graph on $n$ vertices is at most $\lceil \frac{n}{3}\rceil$.
\end{lemma}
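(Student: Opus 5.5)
The plan is to reuse the rank-decomposition from the proof of Lemma~\ref{thm:rwn3ub} and to bound each mixed cut-rank using \emph{only vertex deletions}. The key observation is that deleting every vertex on the smaller side of a cut is a valid mixed deletion-bipartite decomposition with $k_1=0$. If $X$ is that smaller side, take $k_2=|X|$, $k_3=0$ and $u_1,\dots,u_{|X|}$ an enumeration of $X$. Then $\sum_{u\in X} M_{X,Y}(\set{u},N(u)\cap Y)$ places exactly the row of $u$ in position $u$. These rows are disjoint, so the sum over $\mathbb{F}_2$ is exactly $M_{X,Y}$. The score is $2\cdot 0+|X|+0=|X|$. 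By symmetry, using the columns indexed by $v\in Y$ gives score $|Y|$. Hence the mixed cut-rank of $X$ is at most $\min\{|X|,|V\sm X|\}$. This is the analogue of the trivial bound $\rho_G(X)\le\min\{|X|,|V\sm X|\}$; it loses no factor of two, because deletions are charged $1$ rather than $2$.

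Next I would take the same rank-decomposition $(T,\mu)$ as in the proof of Lemma~\ref{thm:rwn3ub}. The vertex set is split into three parts of size at most $\lceil n/3\rceil$, each part is given an arbitrary subcubic decomposition, and the three are joined through a central degree-three node after subdividing an edge in each. For any edge $e$ of $T$, one side of $(A_e,B_e)$ lies entirely within one of the three parts, so $\min\{|A_e|,|B_e|\}\le\lceil n/3\rceil$. By the previous paragraph, the mixed cut-rank of $A_e$ is at most $\lceil n/3\rceil$ for every $e$. So this decomposition has mixed rank-width at most $\lceil n/3\rceil$, and therefore so does $G$.

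The only step that needs care is the first: checking that the deletion-only decomposition is legitimate. Definition~\ref{def:deletionbipartitedecomposition} requires the deleted vertices to be distinct elements of $X$ (or $Y$), and the resulting terms must not cancel modulo $2$. Both hold because each row or column appears exactly once. Small degenerate cases ($n\le 2$, or one of the three parts being empty) are handled as in Lemma~\ref{thm:rwn3ub}: the tree is still subcubic and the bound on the smaller side is unaffected.
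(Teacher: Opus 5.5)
Your proposal is correct and matches the paper's proof: reuse the three-part decomposition from Lemma~\ref{thm:rwn3ub}, observe that the smaller side of every edge-cut has at most $\lceil n/3\rceil$ vertices, and delete all of those vertices to get a score of at most $\lceil n/3\rceil$. The paper leaves implicit that these single-vertex deletion terms have disjoint supports and so sum exactly to $M_{X,Y}$ over $\mathbb{F}_2$; you check this explicitly.
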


\begin{proof}
    Let $G$ be a graph of $n$ vertices. Construct the rank-decomposition $(T, \mu)$ of $G$ described in Lemma \ref{thm:rwn3ub}. By definition, for every edge $e \in T$, we have $\min \{|A_e|, |B_e|\} \leq \lceil\frac{n}{3}\rceil$. Hence, for every edge there is a mixed bipartite-deletion decomposition with score at most $\lceil\frac{n}{3}\rceil$ that consists of deleting every vertex. Therefore, the mixed cut-rank of every edge is at most $\lceil\frac{n}{3}\rceil$, and then so is the mixed rank-width.
\end{proof}

\section{Specialized graph parameters and ZX simplifications}
\label{sec:focused}

\subsection{New Parameters}
In the previous section, our best upper-bound on the size of a cut separating non-Clifford gates originated from the tree-width and the rank-width of the diagram. Both widths are measures that minimize the size of a large number of cuts, not only the one that we end up using. This makes it so that our bounds based on those measures might be very far from the actual time complexity of our algorithms. For a more nuanced analysis, we introduce refined graph parameters: the \emph{focused tree-width} and \emph{focused rank-width}.

\begin{definition}
    Let $G$ be a graph and $S\subseteq V(G)$.
    A \emph{focused rank-decomposition} of \( G \) with special set $S$ is a pair \( (T, \mu) \) where \( T \) is a subcubic tree (each internal node has degree 3), and \( \mu:  \mathrm{leaves}(T) \to 2^V \) is a partition of $V$ mapping each leaf to a subset of $V(G)$ containing at most one vertex of $S$.

    As with a usual rank-decomposition, each edge \( e \in E(T) \) defines a partition \( (A_e, B_e) \) of \( V \) by deleting \( e \), and letting \( A_e \) and \( B_e \) be the sets of vertices mapped to a leaf of the resulting subtrees.
    The \emph{width} of the decomposition is $\max_{e \in E(T)} \rho_G(A_e)$.
\end{definition}

\begin{definition}
Let $G$ be a graph and let $S\subseteq V(G)$.
The \emph{focused rank-width} of \( G \) with special set $S$, denoted \( \frw{S}(G) \), is the minimum width over all possible focused rank-decompositions: $\frw{S}(G) = \min_{(T, \mu)} \max_{e \in E(T)} \rho_G(A_e)$.
\end{definition}

\begin{lemma}\label{lem: bound on frw}
    Let $G$ be a graph and let $S\subseteq V(G)$, then $\frw{S}(G) < |S|$.
\end{lemma}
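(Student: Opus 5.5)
The plan is to exhibit one explicit focused rank-decomposition in which every edge cut has one side containing at most $|S|-1$ vertices, and then use the trivial bound $\rho_G(X)\leq \min\{|X|,|V\setminus X|\}$ noted right after the definition of the cut-rank. The flexibility we exploit is that a leaf may carry an arbitrarily large subset of $V$, as long as it contains at most one vertex of $S$. I will assume $S\neq\emptyset$, since for $S=\emptyset$ the inequality cannot hold. I will also use the convention that a decomposition tree with no edges has width $0$.

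Write $S=\{s_0,s_1,\dots,s_{k-1}\}$ with $k=|S|\geq 1$. For the leaves I take the partition of $V$ into the $k$ parts
\[
L_0=(V\setminus S)\cup\{s_0\}, \qquad L_i=\{s_i\}\ \ (1\leq i\leq k-1).
\]
Each part contains exactly one vertex of $S$, so this is an admissible leaf labelling. For the tree $T$, take any subcubic tree with exactly $k$ leaves, for instance a caterpillar: a path $p_1,\dots,p_{k-2}$ with one leaf hanging off each internal path vertex and an extra leaf at each end. Let $\mu$ send the $k$ leaves bijectively to $L_0,\dots,L_{k-1}$. The small cases are handled directly. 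For $k=1$, $T$ is a single leaf carrying all of $V$; it has no edges, so the width is $0<1$. For $k=2$, $T$ is a single edge.

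Now fix any edge $e\in E(T)$. Exactly one of the two components of $T-e$ contains the leaf labelled $L_0$; say it is the side giving $A_e$. Then $B_e$ is a union of some of the singletons $L_1,\dots,L_{k-1}$, so $B_e\subseteq\{s_1,\dots,s_{k-1}\}$ and $|B_e|\leq k-1$. Hence
\[
\rho_G(A_e)=\rho_G(B_e)\leq |B_e|\leq |S|-1,
\]
using the symmetry of the cut-rank (the rank of a matrix equals that of its transpose). Taking the maximum over $e$ bounds the width of this decomposition by $|S|-1$, and therefore $\frw{S}(G)\leq |S|-1<|S|$.

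There is no real obstacle in this argument. The only points that need care are the degenerate cases: the conventions for $k\le 2$, and the fact that the statement requires $S$ to be nonempty. I will state that assumption explicitly.
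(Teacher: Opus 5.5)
Your proof is correct and is essentially the paper's own argument. One leaf carries $(V\setminus S)\cup\{s_0\}$ and the other $|S|-1$ leaves carry the remaining vertices of $S$ as singletons, so every edge cut has a side of at most $|S|-1$ vertices, which bounds its cut-rank. You also make explicit the points the paper leaves implicit: $S$ must be nonempty for the strict inequality to hold, and the cases $|S|\le 2$ need the stated conventions.
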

\begin{proof}
    Let $s_1,\dots,s_{|S|}$ be the vertices of $S$. Let $T$ be any subcubic tree with $|S|$ leaves which we will denote as $l_1,,l_{|S|}$. Let $\mu$ be defined by
    \[\mu(l_i)= \begin{cases}
       \{l_1\} \cup V\setminus S  &\quad\text{if } i=1\\
       \{i\} &\quad\text{otherwise} \\
     \end{cases}\]
    For any edge of $T$, one of the two parts of the induced partition has at most $|S|-1$ vertices. Since the rank of a matrix is bounded by the minimum of its dimensions, the result follows.
\end{proof}

\begin{lemma}\label{lemmma: low focused rank sep}
    Let $G$ be a graph and let $S\subseteq V(G)$.
    If a graph $G=(V,E)$ has $\frw{S}(G)=k$, then for any weight function $w$ on $S$, there exists a partition $(X,V\setminus X)$ of $V$ such that $\rho_G(X) \leq k$ and such that both $X$ and $V\setminus X$ have at most $2/3$ the weight of $G$.
\end{lemma}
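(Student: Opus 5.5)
Extend $w$ to all of $V$ by setting $w(v)=0$ for $v\notin S$, and normalize so that $w(G)=1$. Then repeat the orientation argument from the proof of Lemma~\ref{lemmma: low-rank sep}, now applied to a focused rank-decomposition $(T,\mu)$ with special set $S$ of width $k$. Each edge $e\in E(T)$ induces a partition $(A_e,B_e)$ with $\rho_G(A_e)\le k$. So it suffices to show that some edge satisfies $w(A_e)\le 2/3$ and $w(B_e)\le 2/3$; we then take $X=A_e$.

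Suppose for contradiction that no edge is balanced. Then for every edge $e$ exactly one side has weight greater than $2/3$. Orient $e$ towards that side. Since $T$ is a finite tree, the oriented tree has a sink $v$.

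\textbf{Case 1: $v$ is internal.} Let $e_1,e_2,e_3$ be its incident edges, and let $A_{e_i}$ be the side of $e_i$ not containing $v$. These three sets partition $V$, so some $w(A_{e_i})\ge 1/3$. On the other hand, $e_i$ points towards $v$, so the opposite side has weight greater than $2/3$, which forces $w(A_{e_i})<1/3$. This is a contradiction. (If $T$ has internal vertices of degree two, the same pigeonhole argument works with two sets and the bound $1/2$.)

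\textbf{Case 2: $v$ is a leaf.} This is the step that differs from the unfocused case. Leaves of a focused decomposition carry sets $\mu(v)$, not single vertices, so we cannot simply say that $v$ cannot be a sink. Instead we use the defining property that $\mu(v)$ contains at most one vertex of $S$. Let $e$ be the unique edge at $v$. Because $e$ points towards $v$, we have $w(\mu(v))>2/3$, and this weight sits on at most one vertex $s\in S$. So $w(s)>2/3$.

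Here the argument genuinely breaks. A single vertex of weight greater than $2/3$ cannot be balanced by any partition: whichever side contains $s$ has weight greater than $2/3$. So the lemma as literally stated fails for such weight functions. The plan is to treat this as an edge case with one of two fixes:
\begin{itemize}
\item assume every vertex has weight at most $1/3$, which holds for the uniform weights used in Algorithm~\ref{algo: rank-width non clifford} once $|S|\ge 3$; or
\item relax the conclusion to ``either such a partition exists, or a single vertex carries more than $2/3$ of the weight.''
\end{itemize}
The second version is exactly what the recursion needs, since a single non-Clifford spider is the base case. Note that the original Lemma~\ref{lemmma: low-rank sep} has the same gap for heavy single vertices, so the focused version simply inherits this caveat; I would state it explicitly.

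Finally, the proof is constructive in the same way as before. Given the decomposition, we pick the edge minimizing $|2w(A_e)-w(G)|$, and this can be done in linear time using prefix sums, as noted after Lemma~\ref{lemmma: low-rank sep}.
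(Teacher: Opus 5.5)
This is the paper's own argument: its proof just says to rerun the orientation/sink proof of Lemma~\ref{lemmma: low-rank sep} on a focused decomposition. Your Case~2 caveat is correct and points to a real defect in the paper, not in your proposal. The paper's claim that the sink ``cannot be a leaf'' is unjustified, and the statement is false as written whenever some vertex of $S$ carries more than $\frac23 w(G)$. For example, take $G$ a single edge $ab$ with $S=\{a,b\}$, $w(a)=1$, $w(b)=0$: then $\frw{S}(G)=1$, but every partition has a side of weight $1$. The same example breaks Lemma~\ref{lemmma: low-rank sep}.

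Your first fix can be weakened to ``every vertex of $S$ has weight at most $\frac23 w(G)$''. This already rules out a leaf sink, because a leaf holds at most one vertex of $S$. For uniform weights it only needs $|S|\ge 2$, which is exactly the regime in which Algorithm~\ref{algo: rank-width non clifford} (recursion stopping at $NC(D)=1$) invokes the lemma.
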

\begin{proof}
    Proceed exactly as in the proof of Lemma~\ref{lemmma: low-rank sep}.
\end{proof}

\begin{theorem}\label{thm: frw sim}
    Let $C$ be a graph-like ZX-diagram and let $S$ be its non-Clifford vertices.
    There is an algorithm to classically simulate $C$ that runs in time $\tilde{O}\left(|S|^{\log_{3/2}(4)\cdot \frw{S}(C)}+f(S,C)\right)$ where $f$ is the time taken to compute the focused rank-width with special set $S$ on $C$.
\end{theorem}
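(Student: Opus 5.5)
The plan is to run \cref{algo: rank-width non clifford}, using the focused rank-decomposition in place of an ordinary one. First compute, in time $f(S,C)$, a focused rank-decomposition $(T,\mu)$ of $C$ with special set $S$ of width $k=\frw{S}(C)$. Take the weight function $w$ that is $1$ on $S$ and $0$ elsewhere. Then \cref{lemmma: low focused rank sep} gives an edge $e$ of $T$ whose partition $(A_e,B_e)$ has $\rho_C(A_e)\le k$ and splits $S$ so that each side holds at most $\frac{2}{3}|S|$ non-Clifford vertices. As in the remark after \cref{lemmma: low-rank sep}, this edge is found in linear time by minimising $|2w(A_e)-w(S)|$ over the edges of $T$. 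Next, take a bipartite decomposition of $M_{A_e,B_e}$ of size $\rho_C(A_e)\le k$ and apply the complete bipartite sum \eqref{eq:bipartite sum} once per pair $(A_i,B_i)$. This produces $4^{\rho}\le 4^k$ diagrams in which every $H$-edge across the cut has been removed and only $\pi$ phases were added. Each such diagram splits as a tensor product of a diagram on $A_e$ and one on $B_e$, and we recurse.

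The correctness rests on two facts about what a complete bipartite sum changes. It adds multiples of $\pi$ to phases, so Clifford spiders stay Clifford and non-Clifford spiders stay non-Clifford, and $S$ is unchanged. It alters edges only across the cut. So if $S$ is the special set of a child diagram, the set of non-Clifford vertices on each side is a subset of $S$ of size at most $\frac{2}{3}|S|$. The recursion stops when $|S|\le 1$, where we evaluate by Gottesman--Knill (or directly) in polynomial time. Summing the values of the children gives the value of $C$ by \eqref{eq:bipartite sum}.

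The step I expect to be the main obstacle is showing that the focused rank-width does not increase in the recursive calls, so that the branching factor stays $4^{k}$ throughout. I would do this by restriction. For the side $A_e$ with its special set $S_A=S\cap A_e$, restrict $T$ to the subtree containing the leaves of $A_e$. Intersect each leaf set with $A_e$, delete empty leaves, and suppress degree-two vertices, so that every leaf still contains at most one vertex of $S_A$. For any edge of the new tree, its partition of $A_e$ is the restriction of a partition $(A_{e'},B_{e'})$ of the old tree. Here I would use that the child's graph is an induced subgraph of the edited graph, and that the edits touch only edges across $(A_e,B_e)$, so within $A_e$ the graph is just $C[A_e]$. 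The matrix of the new cut is then a submatrix of $M_{A_{e'},B_{e'}}$ in $C$, and its rank is at most $\rho_C(A_{e'})\le k$. Hence $\frw{S_A}(C[A_e])\le k$, the restricted tree can be built in linear time with no need to recompute $f$, and \cref{lemmma: low focused rank sep} applies again. The same argument covers the side $B_e$.

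Finally, count the work. Each level of recursion multiplies the number of branches by at most $4^k$ and cuts the number of non-Clifford vertices by a factor of $2/3$. So the recursion tree has depth $\log_{3/2}|S|$ and at most $4^{k\log_{3/2}|S|}=|S|^{\log_{3/2}(4)\, k}$ leaves. Here I would treat the tensor-product split as a single recursive call on the disjoint union rather than two calls, bounding the cost of a diagram with $m$ special vertices as $4^k$ times the cost at $\frac{2}{3}m$, up to polynomial factors. Every node of the recursion tree does polynomial work: graph-like rewriting, Gaussian elimination, restricting the tree, and Clifford evaluation at the leaves. Adding the one-time cost $f(S,C)$ gives the stated bound $\tilde{O}\left(|S|^{\log_{3/2}(4)\cdot\frw{S}(C)}+f(S,C)\right)$.
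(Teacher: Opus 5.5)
\textbf{Overall.} Your route is the paper's: run \cref{algo: rank-width non clifford} with the weight supported on $S$, and invoke Lemma~\ref{lemmma: low focused rank sep} for every separator. Your restriction argument is correct. You cut $T$ at $e$, suppress the degree-two vertex, and note that every new cut matrix is a submatrix of some $M_{A_{e'},B_{e'}}$, because the bipartite sums only toggle edges across $(A_e,B_e)$. This supplies two things the paper's one-line proof leaves implicit. First, each child still has focused rank-width at most $k$ with respect to its own special set $S\cap A_e$ or $S\cap B_e$. Second, the decomposition can be reused, so $f(S,C)$ is paid only once, which is what justifies the additive $f$ term.

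\textbf{The gap: the final counting step.} A single recursive call on the disjoint union $D''[A_e]\sqcup D''[B_e]$ still carries all $m$ special vertices. So you cannot charge it the cost of an instance with $\frac{2}{3}m$ special vertices. In fact $(A_e,B_e)$ is itself a rank-$0$, $\frac23$-balanced cut of that union, so Lemma~\ref{lemmma: low focused rank sep} may return it and the recursion makes no progress. Genuine progress from the union requires cutting both components, and their ranks then add.

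\textbf{The fix.} Use the fact that the value of a tensor product is the product of the values, and recurse on the two factors separately. Each of the $4^{\rho}$ branches then costs two calls:
\[
T(m)\le 4^{k}\bigl(T(a)+T(b)\bigr)+\mathrm{poly}(|C|),\qquad a+b=m,\quad a,b\le\tfrac{2}{3}m .
\]
A crude bound gives $(2\cdot 4^k)^{\log_{3/2}m}=m^{\log_{3/2}(4)k+\log_{3/2}2}$ leaves. Using convexity of $x\mapsto x^c$, one gets $m^{\log_{3/2}(4)k+\delta}$ leaves with $\delta=\log_{3/2}\bigl(1+2^{-\log_{3/2}(4)k}\bigr)<\frac14$ for $k\ge 1$.

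So your leaf count of $|S|^{\log_{3/2}(4)k}$ holds only up to a polynomial factor. That factor is of the same kind the $\tilde{O}$ already absorbs here, since the Gottesman--Knill and rewriting work at each node is polynomial in $|C|$, not polylogarithmic. The stated bound therefore follows. The paper's own recurrence $T(n)\le 4^{\rw(D)}T(2n/3)$, which its one-line proof inherits, glosses over exactly the same point.
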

\begin{proof}
    By noticing that in \cref{algo: rank-width non clifford}, the weight function is only non-zero on the non-Clifford vertices, we can apply Lemma~\ref{lemmma: low focused rank sep} to get this bound on the rank of the separators.
\end{proof}



It is more difficult to exploit this focused relaxation for tree decompositions. However, drawing from the connection between balanced separators and tree-width, we define the following:
\begin{definition}
    Let $G$ be a graph and $S\subseteq V(G)$. The \emph{focused tree-width} of $G$ with special set $S$, denoted $\ftw{S}(G)$, is the minimum $k$ for which the following holds: for every weight function $w$ for which $w(V(G)\setminus S)=0$, there exists $X\subseteq V(G)$ with $|X|\leq k$ such that no connected component of $G\setminus X$ has more than half the weight of $G$.
\end{definition}
The analysis of \cref{algo: tree width clifford} extends straightforwardly with this measure.
We defer the study of the computability of focused widths to further work.

\subsection{ZX-calculus Simplifications}
There has been a lot of effort in the ZX-calculus literature to design simplification and compilation routines for ZX-diagrams. Those have been used effectively to achieve efficient classical simulation algorithms by combining them with stabilizer decomposition in a `decompose-simplify-repeat' paradigm (e.g see \cite{kissinger_simulating_2022}). An interesting possibility for these approaches is to combine them with tensor network contraction algorithms (such as \cite{Gray2021hyperoptimized}). This kind of hybrid contraction and stabiliser decomposition algorithm has been mentioned in \cite{codsi_cutting-edge_nodate}. However, one of the main drawbacks of combining these kinds of strategies with ZX simplification routines is that we usually lose all control over some of the original graph parameters of the tensor network, as noted in \cite{codsi_cutting-edge_nodate, sutcliffe_procedurally_2024}. For instance, the pivot rule, which is extensively used in simplification routines, can increase the tree-width of a diagram without bound. This makes any interesting theoretical bounds very hard to achieve and, in practice, brings little to no advantage. While recent efforts to predict structural changes during post-decomposition simplification have yielded significant practical improvements \cite{sutcliffe_procedurally_2024}, these methods have yet to maintain rigorous control over graph parameters, such as the existence of good cuts.

This issue can mostly be alleviated by using (focused) rank-width instead of tree-width. Indeed, (focused) rank-width of a graph can only decrease under \emph{vertex-minors} \cite{vertexminor}. This implies that both local complementation and pivoting do not increase (focused) rank-width. Similarly, all but one of the ZX-calculus rules can only decrease the (focused) rank-width. The only exception is the spider-fusion rule, which can increase (focused) rank-width by at most one. On the other hand, using a tree-width-based algorithm, one can use spider-fusion, as tree-width is monotone under contraction \cite{GraphMinorsII}. 

\section{Benchmarks}
\label{sec:benchmarks}

\subsection{Implementation}

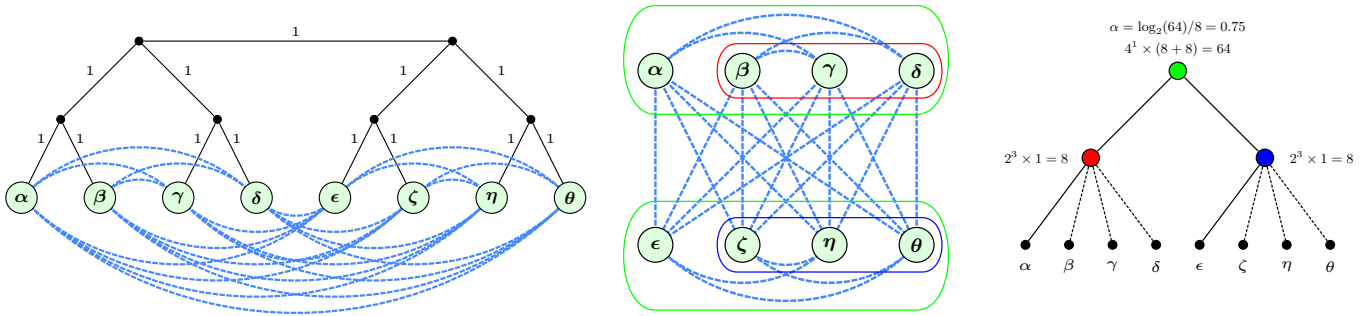
\begin{figure}
    \resizebox{\linewidth}{!}{\scalebox{0.9}{$\begin{tikzpicture}
	\begin{pgfonlayer}{nodelayer}
		\node [style=Z phase dot] (0) at (-1, -1) {$\delta$};
		\node [style=Z phase dot] (1) at (-5, -1) {$\beta$};
		\node [style=Z phase dot] (2) at (1, -1) {$\epsilon$};
		\node [style=Z phase dot] (3) at (3, -1) {$\zeta$};
		\node [style=Z phase dot] (4) at (5, -1) {$\eta$};
		\node [style=Z phase dot] (5) at (7, -1) {$\theta$};
		\node [style=Z phase dot] (6) at (-3, -1) {$\gamma$};
		\node [style=Z phase dot] (7) at (-7, -1) {$\alpha$};
		\node [style=vertex] (8) at (6, 1) {};
		\node [style=vertex] (9) at (2, 1) {};
		\node [style=vertex] (10) at (4, 3) {};
		\node [style=vertex] (11) at (-2, 1) {};
		\node [style=vertex] (13) at (-6, 1) {};
		\node [style=vertex] (14) at (-4, 3) {};
		\node [style=none] (15) at (-6.5, 0.5) {\tiny $1$};
		\node [style=none] (16) at (-5.5, 0.5) {\tiny $1$};
		\node [style=none] (17) at (-2.5, 0.5) {\tiny $1$};
		\node [style=none] (18) at (-1.5, 0.5) {\tiny $1$};
		\node [style=none] (19) at (1.5, 0.5) {\tiny $1$};
		\node [style=none] (20) at (2.5, 0.5) {\tiny $1$};
		\node [style=none] (21) at (5.5, 0.5) {\tiny $1$};
		\node [style=none] (22) at (6.5, 0.5) {\tiny $1$};
		\node [style=none] (23) at (0, 3.25) {\tiny $1$};
		\node [style=none] (24) at (5.25, 2.25) {\tiny $1$};
		\node [style=none] (25) at (2.75, 2.25) {\tiny $1$};
		\node [style=none] (26) at (-2.75, 2.25) {\tiny $1$};
		\node [style=none] (27) at (-5.25, 2.25) {\tiny $1$};
	\end{pgfonlayer}
	\begin{pgfonlayer}{edgelayer}
		\draw (8) to (4);
		\draw (8) to (5);
		\draw (9) to (2);
		\draw (9) to (3);
		\draw (10) to (9);
		\draw (10) to (8);
		\draw (11) to (6);
		\draw (11) to (0);
		\draw (13) to (7);
		\draw (13) to (1);
		\draw [style=hadamard edge, bend right=45] (0) to (2);
		\draw [style=hadamard edge, bend right=45] (0) to (3);
		\draw [style=hadamard edge, bend right=45] (0) to (4);
		\draw [style=hadamard edge, bend right=45] (0) to (5);
		\draw [style=hadamard edge, bend right=45] (6) to (2);
		\draw [style=hadamard edge, bend right=45, looseness=1.25] (6) to (3);
		\draw [style=hadamard edge, bend right=45] (6) to (4);
		\draw [style=hadamard edge, bend right=45] (6) to (5);
		\draw [style=hadamard edge, bend right=45] (1) to (2);
		\draw [style=hadamard edge, bend right=45] (1) to (3);
		\draw [style=hadamard edge, bend right=45] (1) to (4);
		\draw [style=hadamard edge, bend right=45] (1) to (5);
		\draw [style=hadamard edge, bend right=45] (7) to (2);
		\draw [style=hadamard edge, bend right=45] (7) to (3);
		\draw [style=hadamard edge, bend right=45] (7) to (4);
		\draw [style=hadamard edge, bend right=45] (7) to (5);
		\draw [style=hadamard edge, bend left=315] (6) to (7);
		\draw [style=hadamard edge, bend left=315] (0) to (7);
		\draw [style=hadamard edge, bend left=315] (0) to (1);
		\draw [style=hadamard edge, bend left=315] (6) to (1);
		\draw [style=hadamard edge, bend right=45] (4) to (3);
		\draw [style=hadamard edge, bend left=315] (4) to (2);
		\draw [style=hadamard edge, bend left=45] (2) to (5);
		\draw [style=hadamard edge, bend left=45] (3) to (5);
		\draw (14) to (11);
		\draw (14) to (13);
		\draw (10) to (14);
	\end{pgfonlayer}
\end{tikzpicture}$}\hspace{3mm}$\begin{tikzpicture}
	\begin{pgfonlayer}{nodelayer}
		\node [style=Z phase dot] (0) at (3, 2) {$\delta$};
		\node [style=Z phase dot] (1) at (-1, 2) {$\beta$};
		\node [style=Z phase dot] (2) at (-3, -2) {$\epsilon$};
		\node [style=Z phase dot] (3) at (-1, -2) {$\zeta$};
		\node [style=Z phase dot] (4) at (1, -2) {$\eta$};
		\node [style=Z phase dot] (5) at (3, -2) {$\theta$};
		\node [style=Z phase dot] (6) at (1, 2) {$\gamma$};
		\node [style=Z phase dot] (7) at (-3, 2) {$\alpha$};
		\node [style=none] (8) at (-3, -1) {};
		\node [style=none] (9) at (-3, -3.5) {};
		\node [style=none] (10) at (3, -3.5) {};
		\node [style=none] (11) at (3, -1) {};
		\node [style=none] (12) at (-3, 3.5) {};
		\node [style=none] (13) at (-3, 1) {};
		\node [style=none] (14) at (3, 1) {};
		\node [style=none] (15) at (3, 3.5) {};
		\node [style=none] (16) at (-1.125, 2.625) {};
		\node [style=none] (17) at (-0.25, 2.5) {};
		\node [style=none] (18) at (3.125, 1.375) {};
		\node [style=none] (19) at (-1.125, 1.375) {};
		\node [style=none] (20) at (3.125, 2.625) {};
		\node [style=none] (21) at (-1.125, -1.375) {};
		\node [style=none] (22) at (3.125, -2.625) {};
		\node [style=none] (23) at (-1.125, -2.625) {};
		\node [style=none] (24) at (3.125, -1.375) {};
		\node [style=gn] (25) at (9, 2) {};
		\node [style=rn] (26) at (7, 0) {};
		\node [style=vertex] (27) at (5.5, -2) {};
		\node [style=vertex] (28) at (6.5, -2) {};
		\node [style=vertex] (29) at (7.5, -2) {};
		\node [style=vertex] (30) at (8.5, -2) {};
		\node [style=monbl] (31) at (11, 0) {};
		\node [style=vertex] (32) at (9.5, -2) {};
		\node [style=vertex] (33) at (10.5, -2) {};
		\node [style=vertex] (34) at (11.5, -2) {};
		\node [style=vertex] (35) at (12.5, -2) {};
		\node [style=none] (36) at (5.5, -2.5) {\tiny $\alpha$};
		\node [style=none] (37) at (6.5, -2.5) {\tiny $\beta$};
		\node [style=none] (38) at (7.5, -2.5) {\tiny $\gamma$};
		\node [style=none] (39) at (8.5, -2.5) {\tiny $\delta$};
		\node [style=none] (40) at (9.5, -2.5) {\tiny $\epsilon$};
		\node [style=none] (41) at (10.5, -2.5) {\tiny $\zeta$};
		\node [style=none] (42) at (11.5, -2.5) {\tiny $\eta$};
		\node [style=none] (43) at (12.5, -2.5) {\tiny $\theta$};
		\node [style=none] (44) at (9, 2.5) {\scalebox{0.5}{$4^1 \times (8 + 8) = 64$}};
		\node [style=none] (46) at (5.75, 0) {\scalebox{0.5}{$2^3 \times 1 = 8$}};
		\node [style=none] (47) at (12.25, 0) {\scalebox{0.5}{ $2^3 \times 1 = 8$}};
		\node [style=none] (48) at (9, 3) {\scalebox{0.5}{$\alpha = \log_2(64) / 8 = 0.75$}};
	\end{pgfonlayer}
	\begin{pgfonlayer}{edgelayer}
		\draw [style=hadamard edge] (0) to (2);
		\draw [style=hadamard edge] (0) to (3);
		\draw [style=hadamard edge] (0) to (4);
		\draw [style=hadamard edge] (0) to (5);
		\draw [style=hadamard edge] (6) to (2);
		\draw [style=hadamard edge] (6) to (3);
		\draw [style=hadamard edge] (6) to (4);
		\draw [style=hadamard edge] (6) to (5);
		\draw [style=hadamard edge] (1) to (2);
		\draw [style=hadamard edge] (1) to (3);
		\draw [style=hadamard edge] (1) to (4);
		\draw [style=hadamard edge] (1) to (5);
		\draw [style=hadamard edge] (7) to (2);
		\draw [style=hadamard edge] (7) to (3);
		\draw [style=hadamard edge] (7) to (4);
		\draw [style=hadamard edge] (7) to (5);
		\draw [style=hadamard edge, bend left=315] (6) to (7);
		\draw [style=hadamard edge, bend left=315] (0) to (7);
		\draw [style=hadamard edge, bend left=315] (0) to (1);
		\draw [style=hadamard edge, bend left=315] (6) to (1);
		\draw [style=hadamard edge, bend left=45] (4) to (3);
		\draw [style=hadamard edge, bend right=315] (4) to (2);
		\draw [style=hadamard edge, bend right=45] (2) to (5);
		\draw [style=hadamard edge, bend right=45] (3) to (5);
		\draw [style=green, bend left=90] (11.center) to (10.center);
		\draw [style=green] (10.center) to (9.center);
		\draw [style=green, bend right=270] (9.center) to (8.center);
		\draw [style=green] (8.center) to (11.center);
		\draw [style=green, bend left=90] (15.center) to (14.center);
		\draw [style=green] (14.center) to (13.center);
		\draw [style=green, bend right=270] (13.center) to (12.center);
		\draw [style=green] (12.center) to (15.center);
		\draw [style=red] (18.center) to (19.center);
		\draw [style=red, bend left=90, looseness=1.25] (19.center) to (16.center);
		\draw [style=red] (16.center) to (20.center);
		\draw [style=red, bend left=90, looseness=1.25] (20.center) to (18.center);
		\draw [style=blue] (22.center) to (23.center);
		\draw [style=blue, bend left=90, looseness=1.25] (23.center) to (21.center);
		\draw [style=blue, in=180, out=0] (21.center) to (24.center);
		\draw [style=blue, bend left=90, looseness=1.25] (24.center) to (22.center);
		\draw (25) to (26);
		\draw (25) to (31);
		\draw (26) to (27);
		\draw (31) to (32);
		\draw [style=dashed no arrow] (26) to (28);
		\draw [style=dashed no arrow] (26) to (29);
		\draw [style=dashed no arrow] (26) to (30);
		\draw [style=dashed no arrow] (31) to (33);
		\draw [style=dashed no arrow] (31) to (34);
		\draw [style=dashed no arrow] (31) to (35);
	\end{pgfonlayer}
\end{tikzpicture}$}
    \caption{An example of constructing a rank decomposition and tree of cuts for a ZX-diagram. The diagram we use in this example consists of eight nodes with arbitrary phases, connected as a clique with four edges removed. \emph{Left:} This graph has rank-width one, and mixed rank-width two. A rank decomposition of the graph is given by the black nodes and edges that connect the green Z-spiders of the ZX-diagram. Each edge is labelled with its cut-rank. \emph{Center:} Here we show a potential decomposition of the graph, first into the two green sub-diagrams obtained by performing a single bipartite decomposition cut along the center. Each of the sub-diagrams is then reduced to only one vertex by cutting the three vertices outlined in red and blue, respectively. \emph{Right:} The same decomposition of the graph organized as a tree of cuts. Each interior node represents one decomposition operation, and the leaves represent the vertices of the graph. Leaves connected to interior nodes by dashed edges are the vertices to be cut. Each interior node is labelled with the total number of terms represented by its subtree. This example decomposes $8$ vertices into $64$ terms, resulting in $\alpha = 0.75$.}
    \label{fig:cuttreeexample}
\end{figure}

We implemented Algorithm \ref{algo: rank-width non clifford} using the QuiZX library \cite{quizx} for working with ZX-diagrams. Our goal is to find a decomposition of a ZX-diagram using bipartite decompositions and vertex deletions that minimizes the number of terms in the resulting sum of diagrams. If a diagram of $n$ non-Clifford vertices can be decomposed into $k$ terms, we define
$$ \alpha = \frac{\log_2(k)}{n} $$
as a measure of the efficiency of the decomposition that can be compared independent of the size $n$. We therefore seek to minimize $\alpha$ -- this measure has been widely used to describe stabilizer decomposition methods for classical simulation of quantum circuits \cite{kissinger_simulating_2022}\cite{sutcliffe_procedurally_2024}.

For Algorithm \ref{algo: rank-width non clifford}, rank decompositions were computed using the simulated annealing method of \cite{nouwt2022simulated}. This was modified to replace the cut-rank, which is used to calculate the width of each edge, with the mixed cut-rank (Definition \ref{def:mixed-cut-rank}), approximated using the greedy method suggested in Section \ref{sec:mitigating}. For all experiments, we used $10^5$ annealing steps and kept all other parameters the same as in \cite{nouwt2022simulated}.
After obtaining a rank decomposition, we recursively construct the tree of cuts that will be performed on the diagram. In each step, we iterate over all the partitions induced by edges of the rank decomposition, and select the cut that has a low score $\alpha_\mathrm{eff}$, called \emph{effective} $\alpha$, which was introduced in \cite{codsi_cutting-edge_nodate}. This is defined as follows: suppose that the cut being considered splits a ZX-diagram with $n$ non-Clifford vertices into two parts, with $a$ and $b$ non-Clifford vertices, using $2^w$ terms. If similar cuts could be found recursively in each of the resulting sub-diagrams, the total time complexity would scale with the recurrence:
$$ T(n) = 2^w \left[T(a) + T(b)\right]$$
If we assume that $T(n)$ has the form $T(n) = 2^{\alpha_\mathrm{eff} n}$, then given $w$, $a$, and $b$, the above equation can be solved for $\alpha_\mathrm{eff}$. In particular, we solve it by fixed-point iteration, after rewriting it into the following form:
$$ \alpha_\mathrm{eff} = \frac{w}{n - \max\{a, b\}} + \frac{1}{n - \max\{a, b\}}\log_2\left(1 + 2^{-\alpha_\mathrm{eff} |a - b|}\right) $$
At each step we pick a cut at random with probability proportional to $(\alpha_\mathrm{eff}T)^{-1}$, where $T = 0.05$ is a temperature parameter. For each diagram, we construct $5000$ cut trees in this way and select the best. An example is shown in Figure \ref{fig:cuttreeexample}. From the cut tree, we can determine the exact value of $\alpha$ by walking the tree bottom-up to compute the total number of terms in the decomposition; hence, we do not actually need to perform the decomposition, which allows us to estimate costs much larger than could be run practically.
\subsection{Results}
\noindent We benchmark our method on three families of graph-like ZX-diagrams:
\begin{enumerate}
    \item Er\H{o}s-R\'enyi random graphs with between $16$ and $64$ vertices, and edge probabilities from $p = 0.1$ to $p = 1$. Each vertex is set to an arbitrary phase. The results are shown in Figure \ref{fig:erdosrenyi}.
    \item Uniformly random Clifford+$R_Z$ gate circuits. We used circuits of between $16$ and $40$ qubits and $100$ to $1000$ gates. Phase gates were selected with a fixed probability of $0.2$, and $H$, $S$, and $CNOT$ gates were weighted equally. The results are shown in Figure \ref{fig:cliffordt}.
    \item Random Clifford+$R_Z$ circuits obtained by concatenating random Pauli gadgets \cite{pauliGadget}, which are subcircuits that act on many qubits. Each Pauli gadget is chosen uniformly at random, subject to the constraint that its support is at most half the qubits. We take between $8$ and $32$ qubits, with the number of Pauli gadgets between one and four times the number of qubits. The results are shown in Figure \ref{fig:pauli}.
\end{enumerate}
In each instance, we applied the \texttt{full-reduce} procedure of \cite{cliffsimp} to translate and simplify the circuit into a graph-like ZX-diagram. Note that the values of $\alpha$ shown in all of these plots refer to the number of non-Clifford vertices \emph{after} simplification of the ZX-diagram, to better measure the properties of our method, rather than the efficacy of the simplification procedure.

The results of our benchmarks are discussed in the captions of the relevant figures.  Significantly, we observe that our algorithm performs well when the edge density is either very high or very low, in contrast to typical stabilizer methods and tree-width-based methods. However, we note here that, across all the benchmarks, the values of $\alpha$ that we observe are, more often than not, significantly higher than for stabilizer decomposition methods for simulating Clifford+T circuits (e.g \cite{kissinger_simulating_2022,codsi_cutting-edge_nodate}). While this indicates that our method is not generally competitive for Clifford+T circuits, it has the advantage that it depends only on the structure of the ZX-diagram, and not the phases attached to each vertex. Therefore, it can be used for circuits containing arbitrary phase gates, while most stabilizer decomposition methods cannot make progress in these cases. A notable exception is \cite{sutcliffe_procedurally_2024}; the algorithms presented here would benefit from a more detailed comparison with their method in the future. We also mention that the chosen families are, in some sense, the worst case for our algorithms. Indeed, random graphs typically maximize most graph widths (for example, $\tw(G(n,1/2)) = \rw(G(n,1/2)) = \Omega(n)$ almost surely \cite{treewidthRandom,rankwidthRandom}). One would expect better performance in more structured circuits such as when the topology is restricted.

\begin{figure}
    \centering
    \hspace*{-\linewidth}\includegraphics[width=\linewidth]{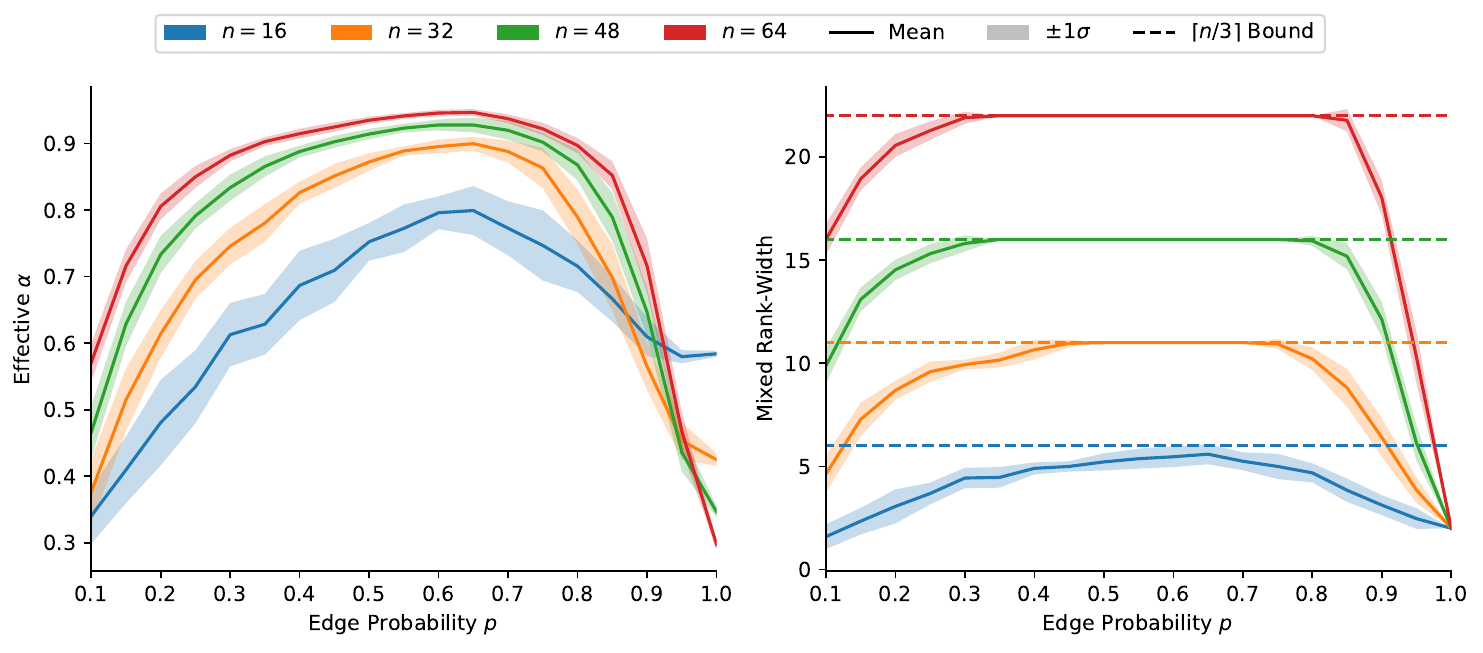}
    \caption{The mixed rank-width and computed $\alpha$ values to decompose an Erd\H{o}s-R\'enyi random graph with a varying number of vertices $n$ and probability of each edge $p$. We evaluated graphs with $n = 16$ to $64$ and $p = 0.1$ to $1$. We do not evaluate $p < 0.1$ to ensure that the graphs are not comprised of many small disconnected components. We see that when $p$ is small or large, the mixed rank-width and $\alpha$ decrease, while for intermediate $p$ the graphs almost always have maximal mixed rank-width. This is to be expected as rank-width is maximal with high probability for any constant $p$ as $n \to \infty$ \cite{rwRandom}. Note that the maximum $\alpha$ is not at $p = \frac{1}{2}$ because at lower densities vertex deletions are more prevalent, while at higher densities bipartite decompositions are required.}
    \label{fig:erdosrenyi}
\end{figure}

\begin{figure}
    \centering
    \hspace*{-\linewidth}\includegraphics[width=\linewidth]{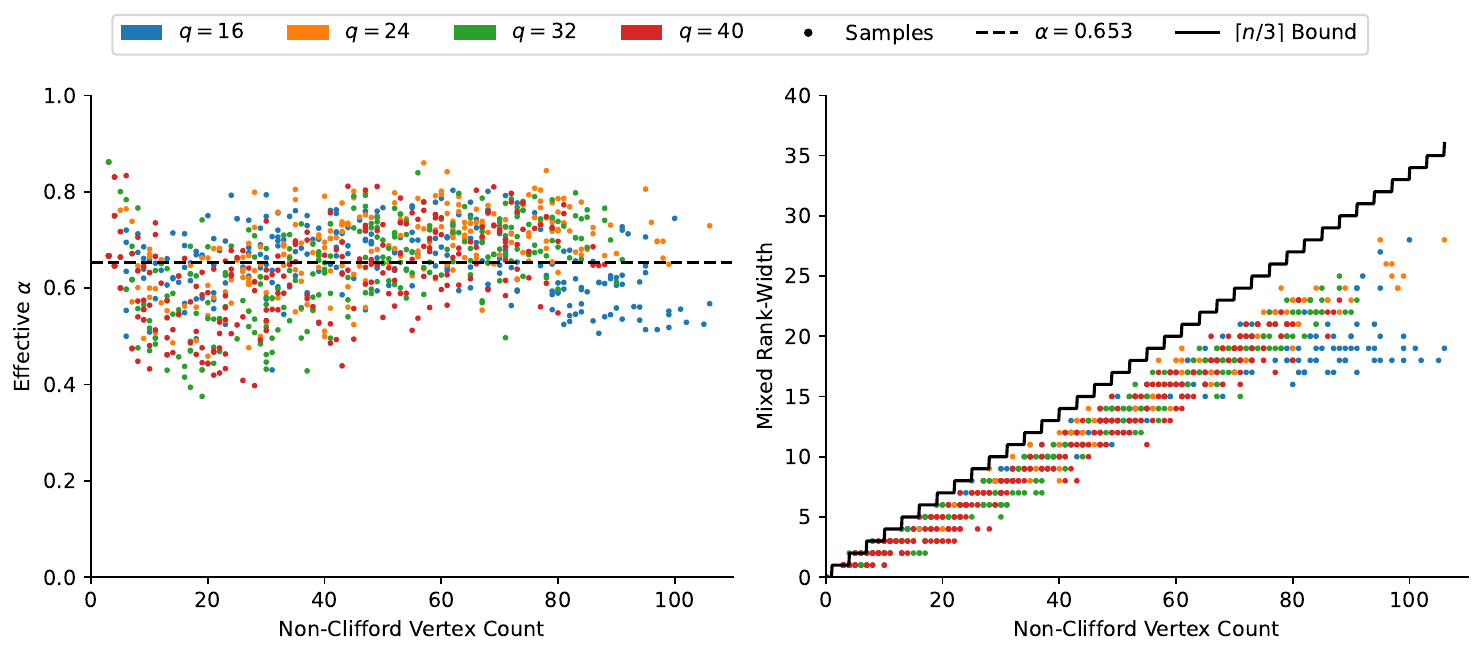}
    \caption{The mixed rank-width and computed $\alpha$ values to decompose a graph-like ZX-diagram obtained from simplifying a uniformly random Clifford+$R_Z$ circuit. We evaluated circuits with between $16$ and $40$ qubits, and between $100$ and $1000$ gates, with a fixed $20\%$ fraction of non-Clifford gates. We observe that the computed $\alpha$ is essentially independent of both qubit count and non-Clifford vertex count. The mean $\alpha \approx 0.653$ is marked in a dashed line. Moreover, we see that the mixed rank-width appears to grow linearly with non-Clifford vertex count, but does not saturate the upper bound. This suggests that the graphs obtained from this process do not look like Erd\H{o}s-R\'enyi graphs.}
    \label{fig:cliffordt}
\end{figure}

\begin{figure}
    \centering
    \hspace*{-\linewidth}\includegraphics[width=\linewidth]{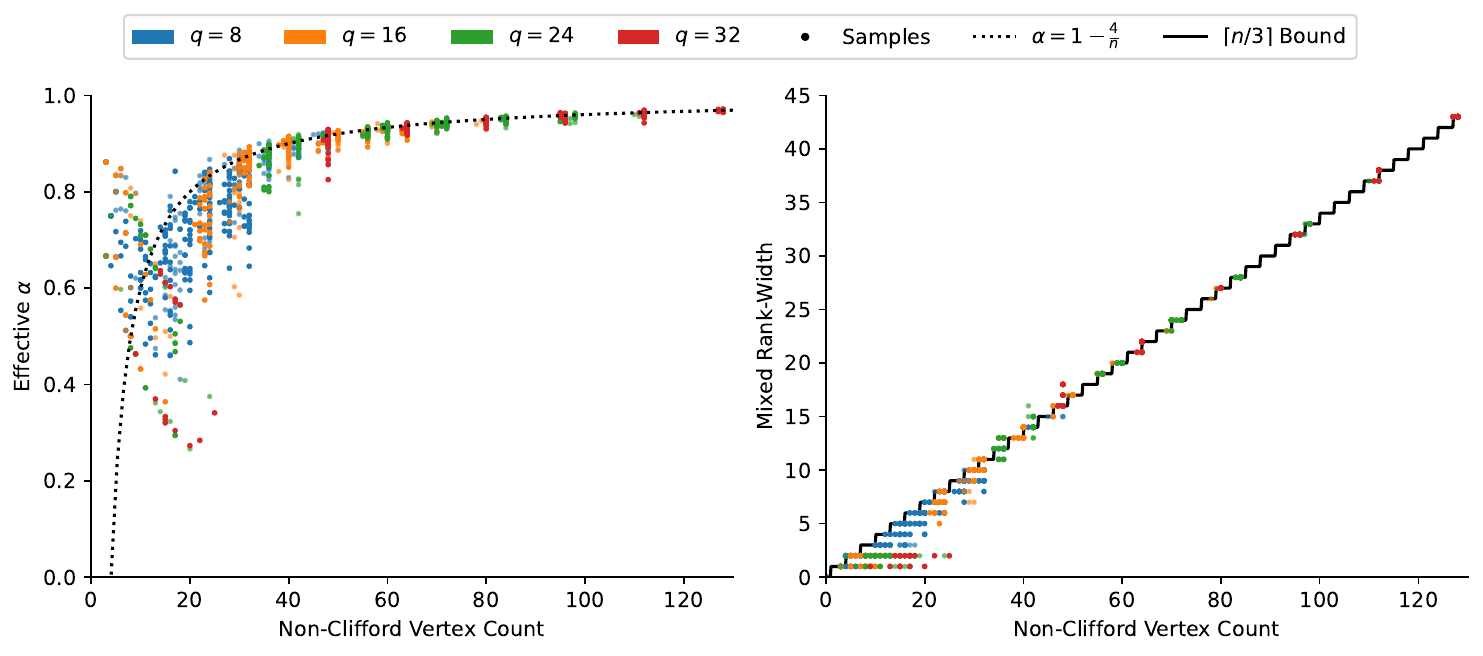}
    \caption{The mixed rank-width and computed $\alpha$ values to decompose a graph-like ZX-diagram obtained from simplifying a circuit composed of Pauli gadgets \cite{pauliGadget} with weight at most half the number of qubits. We sampled circuits with between $8$ and $32$ qubits and scaled the number of Pauli gadgets proportional to the number of qubits, by factors between one and four. We observe that, except for small graphs, the mixed rank-width upper bound is almost always saturated; the points that appear to exceed the upper bound have some remaining vertices that are Clifford, which are not taken into account in the bound. As the non-Clifford vertex count grows, the $\alpha$ value trends towards $1$. This indicates that these graphs have similar characteristics to Erd\H{o}s-R\'enyi graphs with intermediate values of $p$, and are essentially the hardest case for our methods. We observe a rough trend of $\alpha \approx 1 - \frac{4}{n}$, which is consistent with the theoretical value obtained by cutting every vertex except a constant number.}
    \label{fig:pauli}
\end{figure}


\newpage

\bibliography{main}

 \appendix

 \section{Appendix: An Introduction to the ZX-Calculus}
 \label{app:zxcalc}
 
 The \zxcalculus is a diagrammatic language similar to the familiar
quantum circuit notation.  A \emph{\zxdiagram} (or simply
\emph{diagram}) consists of \emph{wires} and \emph{spiders}.  Wires
entering the diagram from the left are \emph{inputs}; wires exiting to
the right are \emph{outputs}.  Given two diagrams we can compose them
by joining the outputs of the first to the inputs of the second, or
form their tensor product by simply stacking the two diagrams.

Spiders are linear operations that can have any number of inputs or outputs
wires.  There are two varieties, $Z$ spiders depicted as green dots: 
\begin{equation}
\begin{tikzpicture}
	\begin{pgfonlayer}{nodelayer}
		\node [style=Z phase dot] (0) at (0, 0) {$\alpha$};
		\node [style=none] (1) at (1.25, 0.825) {};
		\node [style=none] (2) at (-1.25, 0.825) {};
		\node [style=none] (3) at (-1.25, -0.75) {};
		\node [style=none] (4) at (1.25, -0.75) {};
		\node [style=none] (5) at (1.25, 0.5) {};
		\node [style=none] (6) at (-1.25, 0.5) {};
		\node [style=none, rotate=90] (7) at (-1, -0.125) {...};
		\node [style=none, rotate=90] (8) at (1, -0.125) {...};
	\end{pgfonlayer}
	\begin{pgfonlayer}{edgelayer}
		\draw [in=-141, out=0, looseness=0.75] (3.center) to (0);
		\draw [in=180, out=-39, looseness=0.75] (0) to (4.center);
		\draw [in=180, out=22, looseness=0.75] (0) to (5.center);
		\draw [in=180, out=39, looseness=0.75] (0) to (1.center);
		\draw [in=0, out=158, looseness=0.75] (0) to (6.center);
		\draw [in=141, out=0, looseness=0.75] (2.center) to (0);
	\end{pgfonlayer}
\end{tikzpicture} \ := \ \ketbra{0...0}{0...0} +
e^{i \alpha} \ketbra{1...1}{1...1} \label{eq:def-Z-spider}
\end{equation}
and $X$ spiders depicted as red dots:
\begin{equation}
\begin{tikzpicture}
	\begin{pgfonlayer}{nodelayer}
		\node [style=X phase dot] (9) at (0, 0) {$\alpha$};
		\node [style=none] (10) at (1.25, 0.825) {};
		\node [style=none] (11) at (-1.25, 0.825) {};
		\node [style=none] (12) at (-1.25, -0.75) {};
		\node [style=none] (13) at (1.25, -0.75) {};
		\node [style=none] (14) at (1.25, 0.5) {};
		\node [style=none] (15) at (-1.25, 0.5) {};
		\node [style=none, rotate=90] (16) at (-1, -0.125) {...};
		\node [style=none, rotate=90] (17) at (1, -0.125) {...};
	\end{pgfonlayer}
	\begin{pgfonlayer}{edgelayer}
		\draw [in=-141, out=0, looseness=0.75] (12.center) to (9);
		\draw [in=180, out=-39, looseness=0.75] (9) to (13.center);
		\draw [in=180, out=22, looseness=0.75] (9) to (14.center);
		\draw [in=180, out=39, looseness=0.75] (9) to (10.center);
		\draw [in=0, out=158, looseness=0.75] (9) to (15.center);
		\draw [in=141, out=0, looseness=0.75] (11.center) to (9);
	\end{pgfonlayer}
\end{tikzpicture} \ := \ \ketbra{+...+}{+...+} +
e^{i \alpha} \ketbra{-...-}{-...-}
\end{equation}
When the phase $\alpha$ is zero, we will omit it from the notation.
The diagram as a whole corresponds to a linear map built from the
spiders (and permutations) by the usual composition and tensor product
of linear maps.  As a special case, diagrams with no inputs represent
(unnormalised) state preparations.
For instance:
\begin{equation}\label{eq:state-ZX}
\begin{array}{rcccl}
  \begin{tikzpicture}
	\begin{pgfonlayer}{nodelayer}
		\node [style=Z dot] (0) at (-0.5, 0) {};
		\node [style=none] (1) at (0.5, 0) {};
	\end{pgfonlayer}
	\begin{pgfonlayer}{edgelayer}
		\draw (0) to (1.center);
	\end{pgfonlayer}
\end{tikzpicture} & = & \ket{0} + \ket{1}& \ =& \sqrt{2}\ket{+} \\[0.2cm]
  \begin{tikzpicture}
	\begin{pgfonlayer}{nodelayer}
		\node [style=X dot] (0) at (-0.5, 0) {};
		\node [style=none] (1) at (0.5, 0) {};
	\end{pgfonlayer}
	\begin{pgfonlayer}{edgelayer}
		\draw (0) to (1.center);
	\end{pgfonlayer}
\end{tikzpicture} & = & \ket{+} + \ket{-}& \ =& \sqrt{2}\ket{0} \\[0.2cm]
  \begin{tikzpicture}
	\begin{pgfonlayer}{nodelayer}
		\node [style=Z phase dot] (0) at (0, 0) {$\alpha$};
		\node [style=none] (1) at (-1.25, 0) {};
		\node [style=none] (2) at (1.25, 0) {};
	\end{pgfonlayer}
	\begin{pgfonlayer}{edgelayer}
		\draw (1.center) to (0);
		\draw (0) to (2.center);
	\end{pgfonlayer}
\end{tikzpicture} & = & \ketbra{0}{0} + e^{i \alpha} \ketbra{1}{1}&\ = & Z_\alpha \\[0.2cm]
  \end{array}
\end{equation}
Here the last one is the $Z_\alpha$ phase gate.

For convenience, special notation for the Hadamard gate is used:
\begin{equation}\label{eq:Hdef}
\hfill
\begin{tikzpicture}
	\begin{pgfonlayer}{nodelayer}
		\node [style=none] (0) at (7.3, 0) {};
		\node [style=none] (1) at (9.3, 0) {};
		\node [style=hadamard] (2) at (8.3, 0) {};
		\node [style=none] (3) at (6.3, 0) {$=:$};
		\node [style=none] (4) at (2.05, 0) {};
		\node [style=Z phase dot] (5) at (2.8, 0) {$\frac\pi2$};
		\node [style=X phase dot] (6) at (3.8, 0) {$\frac\pi2$};
		\node [style=Z phase dot] (7) at (4.8, 0) {$\frac\pi2$};
		\node [style=none] (8) at (5.55, 0) {};
		\node [style=box] (9) at (-3, 0) {H};
		\node [style=none] (10) at (-4.5, 0) {};
		\node [style=none] (11) at (-1.75, 0) {};
		\node [style=none] (12) at (-0.75, 0) {=};
		\node [style=none] (13) at (1, 0.1) {$e^{-i\pi/4}$};
	\end{pgfonlayer}
	\begin{pgfonlayer}{edgelayer}
		\draw (0.center) to (2);
		\draw (2) to (1.center);
		\draw (4.center) to (8.center);
		\draw (10.center) to (11.center);
	\end{pgfonlayer}
\end{tikzpicture}
\hfill
\end{equation}

Two diagrams are considered \emph{equal} when one can be deformed to
the other by moving the vertices around in the plane, bending,
unbending, crossing, and uncrossing wires, as long as the connectivity
and the order of the inputs and outputs is maintained. Equivalently, a
ZX-diagram can be considered as a graphical depiction of a tensor network,
as in e.g.~\cite{Penrose}. The Z- and X-spiders are symmetric tensors, and hence, like for other tensor networks of symmetric tensors, the interpretation of a ZX-diagram is unaffected by deformation. 

Quantum circuits can be straightforwardly translated into \zxdiagrams. 
In fact, as we can also represent state preparations and post-selections, 
ZX-diagrams with arbitrary angles are expressive enough to represent any linear map~\cite{CD2}. When we restrict the angles to multiples of $\pi/2$, the maps it represents correspond to Clifford maps: linear maps that can be expressed as a combination of stabiliser state preparations, Clifford unitaries, and stabiliser post-selections~\cite{BackensCompleteness}. 
Instead, restricting the angles to multiples of $\pi/4$ gives us the Clifford+$T$ \emph{fragment}, which corresponds to those linear maps that can be constructed from Clifford+$T$ unitaries together with state preparations and post-selections~\cite{ng_completeness_2018,SimonCompleteness}.

In addition to this extra flexibility which allows us to represent arbitrary linear maps, the real utility for ZX-diagrams comes from the set of rewrite rules they satisfy. This set of equations is called the \zxcalculus. Diagrams that can be transformed into each other using the rules of the ZX-calculus correspond to equal linear maps. We will only need a small number of rules:
\begin{equation}\label{eq:ZX-rules}
    \begin{tikzpicture}
	\begin{pgfonlayer}{nodelayer}
		\node [style=none] (24) at (-4.5, 2.5) {$=$};
		\node [style=none] (27) at (-1, 3.25) {};
		\node [style=none, rotate=90] (28) at (-3.375, 2.5) {...};
		\node [style=none] (29) at (-1, 1.625) {};
		\node [style=none, rotate=90] (30) at (-1.1, 2.5) {...};
		\node [style=none] (31) at (-3.5, 3.25) {};
		\node [style=Z phase dot] (32) at (-2.25, 2.5) {$\ \alpha\!+\!\beta\ $};
		\node [style=none] (34) at (-3.5, 1.625) {};
		\node [style=X dot] (35) at (1.125, 0.5) {};
		\node [style=none] (36) at (3, 0.5) {};
		\node [style=hadamard] (37) at (2.125, 0.5) {};
		\node [style=none] (38) at (4, 0.5) {=};
		\node [style=Z dot] (39) at (5.125, 0.5) {};
		\node [style=none] (40) at (6.125, 0.5) {};
		\node [style=Z phase dot] (41) at (-8.35, 0.5) {$\alpha$};
		\node [style=none] (42) at (-6.125, 0.5) {};
		\node [style=X phase dot] (43) at (-7.175, 0.5) {$b\pi$};
		\node [style=none] (44) at (-5.25, 0.5) {=};
		\node [style=Z phase dot] (45) at (-2, 0.5) {$\ (-1)^b\alpha~$};
		\node [style=none] (46) at (-0.25, 0.5) {};
		\node [style=none] (48) at (-4, 0.625) {$e^{ib\alpha}$};
		\node [style=none] (49) at (-2.75, -0.625) {$b\in\{0,1\}$};
		\node [style=none] (50) at (-5.5, 3.25) {};
		\node [style=none, rotate=90] (51) at (-8.5, 2.5) {...};
		\node [style=none] (52) at (-5.5, 1.625) {};
		\node [style=none, rotate=90] (53) at (-5.75, 2.5) {...};
		\node [style=none] (54) at (-8.75, 3.25) {};
		\node [style=Z phase dot] (55) at (-7.75, 2.5) {$\alpha$};
		\node [style=none] (56) at (-8.75, 1.625) {};
		\node [style=Z phase dot] (57) at (-6.5, 2.5) {$\beta$};
		\node [style=X dot] (58) at (2.25, 2.5) {};
		\node [style=none] (59) at (3, 1.75) {};
		\node [style=none] (60) at (3, 3.25) {};
		\node [style=Z phase dot] (61) at (1.125, 2.5) {$k\pi$};
		\node [style=none] (62) at (4, 2.5) {=};
		\node [style=none] (63) at (7.375, 1.875) {};
		\node [style=none] (64) at (7.375, 3.25) {};
		\node [style=Z phase dot] (65) at (6.375, 3.25) {$k\pi$};
		\node [style=Z phase dot] (66) at (6.375, 1.875) {$k\pi$};
		\node [style=none] (67) at (5.1, 2.5) {$\frac{1}{\sqrt{2}}$};
	\end{pgfonlayer}
	\begin{pgfonlayer}{edgelayer}
		\draw [style=simple, in=-120, out=0, looseness=0.75] (34.center) to (32);
		\draw [style=simple, in=-180, out=-60, looseness=0.75] (32) to (29.center);
		\draw [style=simple, in=180, out=45, looseness=0.75] (32) to (27.center);
		\draw [style=simple, in=0, out=135, looseness=0.75] (32) to (31.center);
		\draw (35) to (36.center);
		\draw (39) to (40.center);
		\draw (41) to (42.center);
		\draw (45) to (46.center);
		\draw [style=simple, in=-120, out=0, looseness=0.75] (56.center) to (55);
		\draw [style=simple, in=0, out=135, looseness=0.75] (55) to (54.center);
		\draw [in=180, out=45, looseness=0.75] (57) to (50.center);
		\draw [in=-180, out=-60, looseness=0.75] (57) to (52.center);
		\draw (55) to (57);
		\draw [in=-60, out=180] (59.center) to (58);
		\draw [in=60, out=180] (60.center) to (58);
		\draw (61) to (58);
		\draw (65) to (64.center);
		\draw (66) to (63.center);
	\end{pgfonlayer}
\end{tikzpicture}
\end{equation}
These are the \emph{spider-fusion} rule---that adjacent spiders of the same colour fuse together (which also holds for the X-spider)---and special cases of the \emph{colour-change} rule---that a Hadamard can be commuted through a spider to change its colour---and the \emph{$\pi$-copy} rules---that a $\pi$ phase can be commuted through the opposite colour~\cite{vandewetering2020zxcalculus}.

\end{document}